\newtheoremstyle{theroremthm}
{1pt}
{1pt}
{}
{}
{\bf}
{:}
{.5em}
{}
\theoremstyle{theroremthm}
\newtheorem{theorem}{Theorem}
\newtheorem{lemma}{Lemma}
\begin{document}
%
\title{Unified Framework for the Effective Rate Analysis of Wireless Communication Systems over MISO Fading Channels}
%
\author{\IEEEauthorblockN{Minglei~You,~\IEEEmembership{Student Member,~IEEE,} Hongjian~Sun,~\IEEEmembership{Senior Member,~IEEE,}\\ Jing~Jiang,~\IEEEmembership{Member,~IEEE}~and~Jiayi~Zhang,~\IEEEmembership{Member,~IEEE}}
\thanks{The research leading to these results has received fundings from the European Commission's Horizon
2020 Framework Programme (H2020/2014-2020) under grant agreement No. 646470  SmarterEMC2 Project, the National Natural Science Foundation of China under grant agreement No. 61601020, the National High-tech Research and Development Program of China (863 Program) under grant agreement No. 2015AA01A705 and the Fundamental Research Funds for the Central Universities under grant agreement No. 2016JBZ003 and 2016RC013.}
\thanks{Minglei You, Hongjian Sun and Jing Jiang are with School of Engineering and Computing Sciences, Durham University, Durham DH13LE, UK, (e-mails:\{minglei.you, hongjian.sun, jing.jiang\}@durham.ac.uk).

Jiayi Zhang is with School of Electronics and Information Engineering,
 Beijing Jiaotong University, Beijing 100044, P. R. China, (e-mail: jiayizhang@bjtu.edu.cn). 
}
}
\markboth{IEEE Transactions on Communications}%
{Submitted paper}
\maketitle
\begin{abstract}
This paper proposes a unified framework for the effective rate analysis over arbitrary correlated and not necessarily identical multiple inputs single output (MISO) fading channels, which uses moment generating function (MGF) based approach and \textit{H} transform representation. The proposed framework has the potential to simplify the cumbersome analysis procedure compared to the probability density function (PDF) based approach. Moreover, the effective rates over two specific fading scenarios are investigated, namely independent but not necessarily identical distributed (i.n.i.d.) MISO hyper Fox's \textit{H} fading channels and arbitrary correlated generalized $K$ fading channels. The exact analytical representations for these two scenarios are also presented. By substituting corresponding parameters, the effective rates in various practical fading scenarios, such as Rayleigh, Nakagami-$m$, Weibull/Gamma and generalized $K$ fading channels, are readily available. In addition, asymptotic approximations are provided for the proposed \textit{H} transform and MGF based approach as well as for the effective rate over i.n.i.d. MISO hyper Fox's \textit{H} fading channels. Simulations under various fading scenarios are also presented, which support the validity of the proposed method.
\end{abstract}

\begin{IEEEkeywords}
Effective rate, \textit{H} transform, Fox's \textit{H} function, generalized fading channel, moment generating function.
\end{IEEEkeywords}
\IEEEpeerreviewmaketitle
\section{Introduction}
When evaluating the maximum achievable bit rate of a wireless system over fading channels, Shannon's channel capacity is the most important performance metric. It has also been widely adopted as the basis of performance analysis as well as mechanism design. However, many emerging applications are real-time applications, for example voice over IP, interactive video and most of the smart grid applications. For these real-time applications, not only throughput, but also delay should be considered as one of the quality of service (QoS) requirements. Unfortunately, delay performance cannot be analysed using the traditional Shannon's theory. 

In order to deal with this issue, the theory of effective rate (or effective capacity) has been proposed by Wu and Negi \cite{EffCap}, which bridges the gap between statistical QoS guarantees and the maximum achievable transmission rate. Since then it has been widely used as a powerful analytical tool and QoS provisioning metric in different scenarios. 
In \cite{E2E_EffCap2012}, the resource allocation and flow selection algorithms for video distribution over wireless networks have been studied based on the effective rate theory, where energy efficiency and statistical delay bound have been considered. 
In \cite{EC_TD_Scheduling2010} and \cite{ref_ER_TD2015}, scheduling algorithms have been studied for the multi-user time division downlink systems, exploiting the effective rate as the key to characterizing QoS constraints.
In \cite{EC_2Hop2013}, the effective rate of two-hop wireless communication systems has been studied, where the impact of nodes' buffer constraints on the throughput has been considered.
The effective rate has also been applied in the research of cognitive radio networks \cite{EffCap_CR_2010, ref_EC_CRN2015} for assisting QoS analysis. 

The channel's fading effect is one major reason for the fluctuation of the instantaneous channel capacity, but how to analyse the effective rate under different fading scenarios is still an open research area. There exist a few successful attempts, such as the effective rate over Nakagami-$m$, Rician and generalized $K$ \cite{MISOFramework2012}, Weibull \cite{ER_Weibull2015}, 
$\eta{-}\mu$ \cite{eta_mu_EffRate_2013}
, $\alpha{-}\mu$ \cite{alpha_mu_EffRate_2015}, $\kappa{-}\mu$ shadowed channels \cite{kappa_mu_EffRate_2015}, 
correlated exponential channels\cite{correlated_MISO_EffRate}. 
In \cite{ref_correlated_Nakagami2012}, the effective rate over correlated Nakagami-$m$ channels has been obtained in closed form, where the effective rate over correlated Rician fading channel has also been derived in analytical form.

These approaches can be categorized as probability density function (PDF) based method, since the analyses highly rely on the exact or approximated PDF of the signal-to-noise ratio (SNR). A general PDF based framework for the multiple inputs single output (MISO) effective rate analysis has been proposed in \cite{MISOFramework2012}. However, the joint PDF is unavailable for many fading channels and it is often very hard, if not impossible, to obtain the exact PDF for further analysis. Hence many researches have put a lot of effort on approximating the PDF of multiple fading channels\cite{MISOFramework2012, alpha_mu_EffRate_2015, kappa_mu_EffRate_2015}. This fact results in the situation that the PDF based effective rate analyses are generally studied in a case-by-case way.

To address aforementioned issues, this paper proposes a moment generating function (MGF) based framework for the effective rate analysis over MISO fading channels using \textit{H} transform representation. 
It has been clearly pointed out in \cite{MGF_EGC_MRC_GFading_ShannonC_2012, ShannonC_MGF_2010,ref_MGF_Pout2000} that MGF based approaches are beneficial in simplifying the analysis or even enabling the calculations of some important performance indexes when the PDF based approaches seem impractical. 
There are successful trials to use the MGF based approach to analyse the effective rate under single input single output (SISO) conditions\cite{ER_SISO_MGF_H_2014} and multi-hop systems\cite{ref_MGF_ER_MultiHop_2016}. In addition, \textit{H} transform analysis method \cite{H_Transform2015} has been exploited in this paper. Besides effective rate, many important metrics in the wireless communication system, for example ergodic capacity, error probability and error exponents, are defined in integrations form. It can be observed from vast amount of literatures that it is non-trivial and usually complicated to deal with these integrations  \cite{MISOFramework2012,ER_Weibull2015,eta_mu_EffRate_2013,%
alpha_mu_EffRate_2015,kappa_mu_EffRate_2015,correlated_MISO_EffRate,
ref_correlated_Nakagami2012,
MGF_EGC_MRC_GFading_ShannonC_2012}, especially when multivariate random variable is involved. It has been pointed out in \cite{H_Transform2015} that \textit{H} transform is a potential unified analysis method, which can provide a systematic language in dealing with the statistical metrics involving random variables in the wireless communication systems. This aspect will be shown in later part of this paper.

The scope of this paper is effective rate over arbitrary correlated and not necessarily identical MISO fading channels. Note that when dealing with multiple channels, the instantaneous channel power gain at the receiver is described by multivariate random variable in general cases, which is different from the SISO conditions where univariate random variable may be sufficient. In \cite{ER_SISO_MGF_H_2014}, the SISO case is well studied, yet the results are hard to extend to multiple channel conditions, which is the main focus of this paper. Moreover, in this paper we use the \textit{H} transform and multivariate Fox's \textit{H} function to present the effective rate over both i.n.i.d. and correlated fading channels, which further simplify the analysis and provide a more general analytical framework.
The major contributions of this paper are listed as follows:

\begin{itemize}
\item A MGF based framework is proposed for analysing the effective rate over arbitrary correlated and not necessarily identical MISO fading channels using \textit{H} transform representation. Due to the properties of \textit{H} transform, the cumbersome mathematical calculation involving integration operation can be simplified.

\item \textit{H} transform involving multivariate Fox's \textit{H} function is investigated. As many important metrics in wireless communication systems can be represented by \textit{H} transform and the statistical properties of multiple fading channels may be characterized using multivariate Fox's \textit{H} functions, the obtained results are also valuable in the analysis of other metrics in the multiple channel conditions. Also to the authors' best knowledge, this work is the first to deal with multiple channel problems within the \textit{H} transform framework.
%
%

\item Effective rate over both i.n.i.d. and correlated channel scenarios are studied. The exact analytical expressions of effective rate over i.n.i.d. MISO hyper Fox's \textit{H} fading channels and 
arbitrary correlated generalized $K$ fading channels are given. The effective rates over various practical fading channels are readily available by simply substituting corresponding parameters, such as generalized $K$ and Weibull/Gamma fading channels, which avoids the case-by-case study in these fading scenarios.

\item Asymptotic approximations are provided for the effective rate analysis over MISO fading channels, 
where the truncation error and the discretization error are studied. Using this approximation, the proposed effective rate expressions over i.n.i.d. MISO hyper Fox's \textit{H} fading channels can be accurately estimated in a unified and closed-form framework.  
\end{itemize} 

The rest of this paper is organized as follows. System model is introduced in Section \ref{section System Model and Background Knowledge}. Section \ref{section effective rate general} proposes the MGF based approach for MISO effective rate over arbitrary correlated and not necessarily identical fading channels using \textit{H} transform representation. The effective rate over i.n.i.d. MISO hyper Fox's \textit{H} fading channel as well as arbitrary correlated generalized $K$ fading channels is investigated in Section \ref{section ER H and hyper H}, where several special cases are discussed. Then approximations are investigated in Section \ref{section approximation}, 
with simulation results presented in Section \ref{section simulation}. Finally, conclusions are drawn in Section \ref{section conclusion}.

Throughout the paper, the following notations are used.
$\mathbb{C}$ denotes complex numbers. Bold letters denote vectors.
$\mathbf{1}_{N}$ denotes all-one vector of $N$ elements.
 $(\cdot)^{\dagger}$ denotes the Hermitian transpose.
$\mathbb{E}\{\cdot\}$ denotes the expectation operator of a random variable and $\text{Pr}\{ \cdot \}$ is the probability function. $\Gamma(z)$ is the gamma function 
\cite[eq.(5.2.1)]{MathFunctions2010}. 
$H^{m{,}n}_{p{,}q}\left[\cdot\right]$ and $H\left[\cdot\right]$ denote the univariate Fox's \textit{H} function or multivariate Fox's \textit{H} function defined by \eqref{eq H_function_notation}-\eqref{eq definition of product of H_functions} depending on the variates involved\footnote{For notationally simplicity, we use the similar format to represent both univariate and multivariate Fox's \textit{H} function, which can be distinguished by the number of variates involved. Since univariate Fox's \textit{H} function is included as a special case of multivariate Fox's \textit{H} function, they can be treated universally as multivariate Fox's \textit{H} function. }%
, which are detailed in Appendix \ref{subsection Notations and Symbols}. $\mathbb{H}\{\cdot\}(\cdot)$ denotes the Fox's \textit{H} transform. A bold dash ``$\textbf{---}$'' is used where no parameter is present. $f(\gamma)$ denotes PDF function, while $\phi(s)$ denotes MGF function. 
Some frequently used parameter sequences have been summarized in Table \ref{table sequences}.

\begin{table}[!ht]
\centering
\caption{Frequently Used Order and Parameter Sequences}
\label{table sequences}
\resizebox{0.48\textwidth}{!}{%
\begin{tabular}{|l|l|}
\hline
Symbols & Order and Parameter Sequences \\ \hline
\begin{tabular}[c]{@{}l@{}}$\mathbf{O}_\text{MGF}$\\ $\mathbf{P}_\text{MGF}$\end{tabular} & \begin{tabular}[c]{@{}l@{}}$(1,0,0,1)$\\ $(1,1,\textbf{---},0,\textbf{---},1)$\end{tabular} \\ \hline
\begin{tabular}[c]{@{}l@{}}$\mathbf{O}_\text{ER}$\\ $\mathbf{P}_\text{ER}$\end{tabular} & \begin{tabular}[c]{@{}l@{}}$(1,0,0,1)$\\ $(1,1,\textbf{---},A-1,\textbf{---},1)$\end{tabular} \\ \hline
\begin{tabular}[c]{@{}l@{}}$\mathbf{O}_f$\\ $\mathbf{P}_f$\end{tabular} & \begin{tabular}[c]{@{}l@{}}$(m,n,p,q)$\\ $(u,v,\mathbf{c},\mathbf{d},\mathbf{C},\mathbf{D})$\end{tabular} \\ \hline
\begin{tabular}[c]{@{}l@{}}$\mathbf{O}_{\phi}$\\ $\mathbf{P}_{\phi}$\end{tabular} & \begin{tabular}[c]{@{}l@{}}$(n+1,m,q,p+1)$\\ $(u/v,1/v,1-\mathbf{d}-\mathbf{D},(0,1-\mathbf{c}-\mathbf{C}),\mathbf{D},(1,\mathbf{C}))$\end{tabular} \\ \hline
\end{tabular}
}
\end{table}

\section{System Model}
\label{section System Model and Background Knowledge}
In this paper, a MISO fading channel model is considered, where there are $N$ transmit antennas and only one receive antenna as shown in Fig. \ref{fig_MISO}. The channels are assumed to be flat block fading, then the channel input-output relation can be written as
\begin{equation}
y=\mathbf{hx}+n_{_0},
\end{equation}
where $\mathbf{h}\in {\mathbb{C}}^{1 \times N}$ denotes the MISO channel vector, $\mathbf{x}$ is the transmit signal vector, and $n_{_0}$ represents the complex additive white Gaussian noise with zero mean and variance $N_0$. It is assumed that the transmission power is uniformly allocated across the transmit antennas, and the channels are assumed to be arbitrarily correlated and not necessarily identically distributed.
\begin{figure}[!htbp]
\centering
\includegraphics[width=0.48\textwidth]{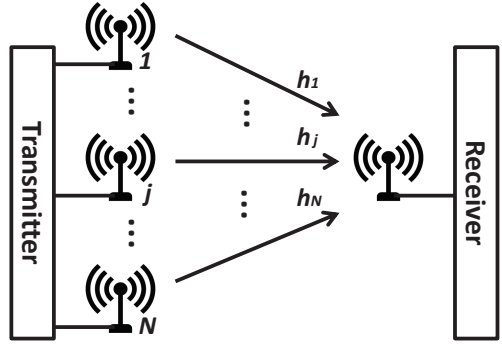}
\caption{MISO system model.}
\label{fig_MISO}
\end{figure}

The average transmit SNR is defined as $\rho=\frac{P}{BN_0}$, where $P$ is the average transmit power of the system and $B$ denotes the bandwidth. The channel state information is assumed to be only available at the receiver and the instantaneous channel power gain of the $j$th channel is defined by $\gamma_j=|h_j|^2$, where $h_j$ is the $j$th component of the fading vector $\mathbf{h}$. 
In this paper, we consider the maximum ratio transmission scheme, hence the instantaneous channel power gain at the receiver end can be defined by $\gamma_\text{end}{=}\sum _{j{=}1}^N \gamma_j$. 
The joint MGF $\phi_\text{end}$ is defined by $\phi_\text{end}(s)=\mathbb{E}\{e^{-s\sum_{j=1}^N \gamma_j}\}$. Specifically, if the channels are independent with each other, then the joint MGF $\phi_{{\text{end}}}(s)$ can be expressed by the product of the MGF of individual channel's power gain as 
\begin{equation}
\label{eq MGF equal to prod of MGF_j}
\phi_\text{end}(s)= \prod _{j{=}1}^N \phi_{j}(s),
\end{equation}
where $\phi_{j}(s)$ is defined by $\phi_{j}(s)=\int _0 ^\infty e^{-s\gamma}f_{j}(\gamma)d\gamma$ and $f_{j}(\gamma)$ is the PDF of the $j$th channel's power gain. 

\section{Effective Rate Analysis Using MGF}
\label{section effective rate general}
Effective rate is the maximum constant rate that a fading channel can support under statistical delay constraints, which can be written as \cite{EffCap}

\begin{equation}
R(\theta)=-\frac{1}{\theta T} \ln\mathbb{E}\left\{e^{-\theta T C}\right\},\quad\theta \neq 0,
\end{equation}
where $C$ represents the system's throughput during a single time block and $T$ denotes the duration of a time block. The QoS exponent $\theta$ is given by \cite{EffCap}
\begin{equation}
\theta = - \lim _{z \rightarrow \infty} \frac{\ln \text{Pr} \{ L>z \}}{z},
\end{equation}
where $L$ is the equilibrium queue-length of the buffer at the transmitter.
When $z$ is large, the buffer violation probability can be approximated by $\text{Pr}\{L\geq z\}\approx e^{-\theta z}$. Correspondingly, if we denote the steady-state delay at the buffer by $D_\text{delay}$, then the probability of $D_\text{delay}$ exceeding the maximum allowed delay $d_\text{max}$ can be given by $\text{Pr}\{D_\text{delay}\geq d_\text{max}\}\approx e^{-\theta \delta d_\text{max}}$, where $\delta$ is determined by the characters of the queueing system \cite{lowSNR2011}.
Hence the minimum required QoS exponent $\theta_0$ is decided by the delay constraints, and in order to guarantee the delay performance, the QoS exponent $\theta$ has to satisfy the constraint $\theta \geq \theta _0$. Moreover, when $\theta _0 \rightarrow 0$, the effective rate approaches the Shannon's capacity \cite{lowSNR2011}.

When the transmitter sends uncorrelated circularly symmetric zero-mean complex Gaussian signals, the effective rate can be written as \cite{lowSNR2011}
%
\begin{equation}
\label{eq ER_theory}
R(\theta){=}{-}\frac{1}{A} \log_2 \mathbb{E} \left\{ \left(1{+}\frac{\rho}{N} \gamma_\text{end}  \right)^{{-}A} \right\},
\end{equation}
where $A=\theta T B \slash \ln2$. 
Normally, \eqref{eq ER_theory} will involve multiple integrations of multivariate functions, since the PDF of $\gamma_\text{end}=\sum_{j=1}^N \gamma_j$ is described by multivariate random variables in general cases. Yet by using the MGF instead of PDF and applying the \textit{H} transform theory, the effective rate can be derived as follows. 
\begin{theorem}
\label{theorem MISO_MGF_H-transform}
The effective rate over arbitrary correlated and not necessarily identically distributed MISO fading channels can be given by
\begin{align}
R(\theta)
\label{eq ER MGF H transform form s}
{=}&{-}\frac{1}{A} \log_2 \frac{1}{\Gamma(A)} \mathbb{H} \left\{\phi_\text{end}(\frac{\rho s}{N}),\mathbf{O}_\text{ER},\mathbf{P}_\text{ER} \right\}\big(1\big),\\
\label{eq ER MGF H transform form identical}
{\equiv}&{-}\frac{1}{A} \log_2 \frac{N}{\rho\Gamma(A)} \mathbb{H} \left\{\phi_\text{end}(s),\mathbf{O}_\text{ER},\mathbf{P}_\text{ER} \right\}\big(\frac{N}{\rho}\big),
\end{align}
where $\mathbf{O}_\text{ER} {=} (1,0,0,1)$ and $\mathbf{P}_\text{ER}{=}(1{,}{1}{,}{\textbf{---}}{,}A{-}1{,}{\textbf{---}}{,}1)$. 
\end{theorem}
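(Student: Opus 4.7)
The plan is to reduce the expectation in \eqref{eq ER_theory} to a one-dimensional integral in which $\gamma_\text{end}$ enters only through an exponential, so that the joint MGF $\phi_\text{end}$ appears naturally, and then to rewrite that integral as a Fox's \textit{H} transform. The key algebraic trick is the Gamma-function identity $(1+x)^{-A}=\frac{1}{\Gamma(A)}\int_0^\infty t^{A-1}e^{-t(1+x)}\,dt$, valid for $A>0$. Applied with $x=\frac{\rho}{N}\gamma_\text{end}$, it converts the awkward power $(1+\cdot)^{-A}$ inside the expectation of \eqref{eq ER_theory} into a nested integral whose inner $\gamma_\text{end}$-dependence is purely of the form $e^{-(\rho t/N)\gamma_\text{end}}$.

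Next I would swap the expectation with the outer $t$-integral by Fubini/Tonelli (the integrand is nonnegative; $t^{A-1}e^{-t}$ is integrable on $(0,\infty)$ for $A>0$, and $e^{-(\rho t/N)\gamma_\text{end}}$ is bounded by unity on the nonnegative quadrant), after which the inner expectation $\mathbb{E}\{e^{-(\rho t/N)\gamma_\text{end}}\}$ is, by definition, the joint MGF $\phi_\text{end}(\rho t/N)$ introduced in Section~\ref{section System Model and Background Knowledge}. This yields
\begin{equation*}
\mathbb{E}\!\left\{\!\left(1{+}\tfrac{\rho}{N}\gamma_\text{end}\right)^{\!-A}\right\}=\frac{1}{\Gamma(A)}\int_0^\infty t^{A-1}e^{-t}\,\phi_\text{end}\!\left(\tfrac{\rho t}{N}\right)dt.
\end{equation*}
The final move is to recognise the kernel $t^{A-1}e^{-t}$ as the Fox's \textit{H} function $H^{1,0}_{0,1}[t\mid\textbf{---};(A-1,1)]$, obtained by translating the $b$-parameter in the standard Mellin--Barnes representation $e^{-t}=H^{1,0}_{0,1}[t\mid\textbf{---};(0,1)]$. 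Its indices $(m,n,p,q)=(1,0,0,1)$ and parameter tuple $(1,1,\textbf{---},A-1,\textbf{---},1)$ coincide with $\mathbf{O}_\text{ER}$ and $\mathbf{P}_\text{ER}$ of Table~\ref{table sequences}, so the right-hand side above is exactly $\frac{1}{\Gamma(A)}\mathbb{H}\{\phi_\text{end}(\rho s/N),\mathbf{O}_\text{ER},\mathbf{P}_\text{ER}\}(1)$. Substituting back into \eqref{eq ER_theory} delivers \eqref{eq ER MGF H transform form s}, and a rescaling of the integration variable inside the H-transform integral produces the equivalent form \eqref{eq ER MGF H transform form identical}.

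The main obstacle I anticipate is not analytical but notational: one must verify, against the \textit{H}-transform definition placed in the paper's Appendix, that the operator takes the form $\mathbb{H}\{f,\mathbf{O},\mathbf{P}\}(\tau)=\int_0^\infty H[\tau t\mid\cdots]\,f(t)\,dt$ with no additional Jacobian factor, so that the two evaluations at $\tau=1$ and $\tau=N/\rho$ differ by exactly the factor $\rho/N$ required to reconcile \eqref{eq ER MGF H transform form s} with \eqref{eq ER MGF H transform form identical}. Beyond this bookkeeping the argument is routine; the Fubini interchange and the Mellin--Barnes identification raise no convergence concerns for $A>0$, which is automatic from $\theta>0$.
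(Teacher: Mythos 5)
Your proposal is correct and is in substance the same argument as the paper's: your Gamma-integral identity $(1+x)^{-A}=\frac{1}{\Gamma(A)}\int_0^\infty t^{A-1}e^{-t(1+x)}dt$ is exactly the paper's \textit{H}-function identity \eqref{eq MGF_identity} written in elementary form, and the subsequent Fubini interchange, identification of the inner expectation with $\phi_\text{end}$, and recognition of the kernel $t^{A-1}e^{-t}$ as $H^{1,0}_{0,1}$ mirror the paper's steps leading to \eqref{eq ER_MGF_int}. The only difference is presentational — you work with the explicit integral representation where the paper cites \textit{H}-transform identities — and your change-of-variable check for the equivalence of \eqref{eq ER MGF H transform form s} and \eqref{eq ER MGF H transform form identical} is exactly right.
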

\begin{proof}
See Appendix \ref{app proof MISO_MGF_H-transform}.
\end{proof}

\emph{Remark}:
The integral form of the effective rate of MISO fading channels can be obtained by using \textit{H} transform definition and substituting the identity of $e^{{-}s}s^{A{-}1}=H^{1,0}_{0,1}\left[s\bigg|\begin{matrix}
{\textbf{---}}\\
(A{-}1,1)
\end{matrix} \right]$ \cite[eq.(2.22)]{HFunction_thebook2010} into \eqref{eq ER MGF H transform form s}, which can be written as
\begin{equation}
\label{eq numerical_MGF_expression_s}
R(\theta){=}{-}\frac{1}{A} \log_2 \frac{1}{\Gamma(A)}\int_0^\infty e^{{-}s} s^{A{-}1}\phi_\text{end}(\frac{\rho s}{N})ds.
\end{equation}
Using the fact that $0\leq\phi_\text{end}(s)\leq 1$, the integral can be proved to be absolutely converge. Hence the \textit{H} transform in \eqref{eq ER MGF H transform form s} and \eqref{eq ER MGF H transform form identical} exist.

We highlight that \eqref{eq ER MGF H transform form s} and its equivalent \eqref{eq ER MGF H transform form identical} in \textit{H} transform form are more attractive than the integral form in \eqref{eq numerical_MGF_expression_s}.
On one hand, the MGF of most commonly used fading distributions can be written in Fox's \textit{H} function format\cite{HyperFoxHFading2012}, which facilitates the application of Theorem \ref{theorem MISO_MGF_H-transform}.
On the other hand, we can interpret the effective rate as the \textit{H} transform of the PDF or MGF of the power gain by the parameter sequence $\mathbf{O}_\text{ER}$ and $\mathbf{P}_\text{ER}$ with some more manipulations. It has been studied in \cite{H_Transform2015} that many important metrics such as ergodic capacity, error probability, error exponent can be expressed in similar \textit{H} transform format with corresponding parameter sequences. Note that one merit of \textit{H} transform is that the manipulation of parameter sequences only involves very basic arithmetic operations, hence the \textit{H} transform representation can provide a unified, systematic and simple framework for wireless performance analysis such as effective rate. In addition, in Section \ref{section ER H and hyper H}, we will show that for deriving MGF from the known PDF or analysing effective rate over SISO, i.n.i.d. and correlated scenarios, \textit{H} transform is a very useful analysing tool.

In addition, compared to the PDF based approach, the MGF based approach has many advantages. 
First, the PDF based approach can be viewed as a special case of the MGF based approach. This is due to the fact that in the PDF based approach, the joint PDF has to be obtained in advance of further analysis. When the joint PDF is available, by using the relationship of PDF and MGF, it can be proved that the PDF based approach is involved in the case of $N=1$ in the proposed MGF based approach. 
Second, in the i.n.i.d. scenarios, the joint MGF can be calculated by the product of the individual channels' MGF, which enables the analyses where the PDF based approach has difficulties \cite{kappa_mu_inid_2014, alpha_mu_EffRate_2015}. This feature makes the analysis more flexible and applicable to different and complex channel conditions.

%
%
%
\section{Exact effective rate over MISO fading channels}
\label{section ER H and hyper H}
In this section, we will investigate the effective rate over specific i.n.i.d. and correlated channels. For the i.n.i.d. scenario, the effective rate over i.n.i.d. hyper Fox's \textit{H} fading channels is derived, while the arbitrary correlated generalized $K$ fading channels are considered for the correlated scenario.
\subsection{Effective rate over i.n.i.d. MISO hyper Fox's \textit{H} fading channels}
The hyper Fox's \textit{H} fading model \cite{HyperFoxHFading2012} 
uses the sum of several \textit{H} variates to exactly represent or approximate a very wide range of different fading distribution models, including Rayleigh, Weibull, Nakagami-$m$, Weibull/Gamma and generalized $K$ fading models. For a full list of special cases, interested readers are referred to \cite{HyperFoxHFading2012}. By matching the parameters, effective rates over various i.n.i.d. MISO fading channels are readily available, which will further simplify the effective rate analysis under such conditions. 

Let $\gamma_j$ be a random variable following the hyper Fox's \textit{H} fading distribution, then its PDF is given by \cite{HyperFoxHFading2012}
\begin{align}
\label{eq Hyper Fox's H Fading or notation}
f_{j}(\gamma)
=\sum_{k{=}1}^{K_j}H[\gamma,\mathbf{O}_{f_{_{j,k}}},\mathbf{P}_{f_{_{j,k}}}],
\end{align}
where 
\begin{equation}
\label{eq O and P gamma jr definition}
\left\lbrace \begin{aligned}
&\mathbf{O}_{f_{j,k}}{=}(m_{_{j,k}},n_{_{j,k}},p_{_{j,k}},q_{_{j,k}})\\
&\mathbf{P}_{f_{_{j,k}}}{=}(u_{_{j,k}},v_{_{j,k}},\mathbf{c}_{_{j,k}}{,}\mathbf{d}_{_{j,k}}{,}
\mathbf{C}_{_{j,k}}{,}\mathbf{D}_{_{j,k}})
\end{aligned}\right.
\end{equation}
defined over $\gamma \geq 0$ and the subscripts of $m_{_{j,k}}$ denote that this parameter is associated to the $k$th parameter set of the PDF $f_j(\gamma)$, where $k\in (1,2,\dots,K_j)$. Same notation rule applies to the other parameters. 
The notations $K_j$, $\mathbf{O}_{f_{_{j,k}}}$and $\mathbf{P}_{f_{_{j,k}}}$are the parameters satisfying a distributional structure such that $f_{j}(\gamma)\geq 0$ for all $\gamma\geq 0$ and $\int_0^\infty f_{j}(\gamma) d\gamma=1$ \cite{HyperFoxHFading2012}. The necessary condition for the Fox's \textit{H} function in \eqref{eq Hyper Fox's H Fading or notation} to be a density function can be given by \cite{H_Transform2015,Cook_H_thesis1992,HVariates1972}
\begin{equation}
\left\lbrace 
\begin{aligned}
\label{eq Hyper H necessary conditions}
c_j+C_j<1, \forall j=1,2,\dots,n\\
-\frac{d_j}{D_j}<1, \forall j=1,2,\dots,m
\end{aligned}\right.
\end{equation}

The MGF $\phi_{j}$ can be obtained by using \textit{H} transform defined in \eqref{eq H Transform definition} as \cite{H_Transform2015}
\begin{equation}
\label{eq hyper Fox's H MGF definition}
\begin{aligned}
\phi_{j}(s)
{=}\sum_{k{=}1}^{K_j} H[s,\mathbf{O}_{\phi_{_{j,k}}},\mathbf{P}_{\phi_{_{j,k}}}],\\
\end{aligned}
\end{equation}
where the parameter sequences of the MGF can be calculated by
\begin{equation}
\label{eq O P phi jr hyper Fox's H definition}
\left\lbrace
\begin{aligned}
&\mathbf{O}_{\phi_{_{j,k}}}{=}\mathbf{O}_\text{MGF} \boxdot \mathbf{O}_{f_{_{j,k}}}\\
&\mathbf{P}_{\phi_{_{j,k}}}{=}\mathbf{P}_\text{MGF} \boxdot \mathbf{P}_{f_{_{j,k}}}
\end{aligned}\right.
\end{equation}
where $\mathbf{O}_\text{MGF} = (1,0,0,1)$ and
$\mathbf{P}_\text{MGF} =(1,1,{\textbf{---}},0,{\textbf{---}},1)$ as given in Table \ref{table sequences}. The Mellin operation $\boxdot$ is defined in Appendix \ref{subsection Notations and Symbols}, which is a typical operation in \textit{H} transform. Mellin operation is very useful since when the integration kernel is fixed, for example in the case of deriving MGF from PDF, it uses basic arithmetic manipulation of parameters to replace the integration operation procedure.

As the derivation will involve the \textit{H} transform of multivariate \textit{H} function, we introduce the following lemma, which enables us to analyse the effective rate over i.n.i.d. MISO fading channels.
\begin{lemma}
\label{theorem Laplace transform of the multiplication of the H functions}
The \textit{H} transform of the product of Fox's \textit{H} functions can be given by
\begin{equation}
\begin{aligned}
&\mathbb{H}\left\lbrace \prod _{j{=}1}^N H\left[z,\mathbf{O}_j,\mathbf{P}_j\right],\mathbf{O}_\text{R'}, \mathbf{P}_\text{R'}
\right\rbrace(s)\\
\label{eq Laplace transform of the multiplication of the H functions multivariate H form}
{=}& s^{{-}\eta}H^{0{,}1}_{1{,}0}\left[\begin{matrix}
(1{-}\eta,\mathbf{1}_N)\\
{\textbf{---}}
\end{matrix}{:}\left(\frac{1}{s}{,}\mathbf{O}_j{,}\mathbf{P}_j\right)_{1{,}N}\right],
\end{aligned}
\end{equation}
where $\mathbf{O}_\text{R'}=(1,0,0,1)$, $\mathbf{P}_\text{R'}=(s^{1-\eta},1,\textbf{---},\eta-1,\textbf{---},1)$, $s>0$ and $\eta \ge 0$. The short notation of multivariate Fox's \textit{H} function $H^{0{,}n_0}_{p_0{,}q_0}\left[\cdot\right]$ is defined by \eqref{eq multivariate_H_function_definition} in Appendix \ref{subsection Notations and Symbols}.
Specially, when $N=1$, \eqref{eq Laplace transform of the multiplication of the H functions multivariate H form} reduces to 
\begin{equation}
\begin{aligned}
\label{eq Laplace transform of single H functions}
&\mathbb{H}\left\lbrace uH^{m{,}n}_{p{,}q}\left[vz{\bigg|}
\begin{matrix}
 \mathbf{c}{,}\mathbf{C}\\
 \mathbf{d},\mathbf{D}
\end{matrix}
\right],\mathbf{O}_\text{R'}, \mathbf{P}_\text{R'}
\right\rbrace(s)\\
{=}&us^{{-}\eta} H^{m{,}n+1}_{p+1{,}q}\!\left[\frac{v}{s}{\bigg|}\begin{matrix}
 (1{-}\eta{,}\mathbf{c}){,}(1{,}\mathbf{C})\\
 \quad\,\,\mathbf{d},\mathbf{D}
\end{matrix} \right].
\end{aligned}
\end{equation}
\end{lemma}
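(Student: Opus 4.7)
The plan is to represent each Fox's \textit{H} function in the product via its Mellin--Barnes contour integral, interchange the resulting $N$-fold contour integral with the outer \textit{H} transform integral, and then evaluate the inner $z$-integral by recognizing it as an elementary Gamma integral. Under the stated parameters, using the identity $H^{1,0}_{0,1}[x|\textbf{---};(\eta-1,1)]=x^{\eta-1}e^{-x}$, the transform $\mathbb{H}\{\cdot,\mathbf{O}_\text{R'},\mathbf{P}_\text{R'}\}(s)$ reduces to the generalized Laplace transform $\int_0^\infty z^{\eta-1}e^{-sz}(\cdot)\,dz$, so all the analytic content sits in (i) a Gamma integral and (ii) repackaging the result via the multivariate Fox's \textit{H} function defined in Appendix \ref{subsection Notations and Symbols}.

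First I would write, for each $j=1,\ldots,N$, the factor $H[z,\mathbf{O}_j,\mathbf{P}_j]=u_jH^{m_j,n_j}_{p_j,q_j}[v_jz\,|\,(\mathbf{c}_j,\mathbf{C}_j);(\mathbf{d}_j,\mathbf{D}_j)]$ as a Mellin--Barnes integral $\frac{u_j}{2\pi i}\int_{L_j}\chi_j(\xi_j)(v_jz)^{-\xi_j}d\xi_j$, where $\chi_j$ is the usual ratio of Gamma functions determined by $\mathbf{P}_j$. The product then becomes an $N$-fold contour integral whose entire $z$-dependence collapses to $z^{-\sum_j\xi_j}$. Inserting this into $\int_0^\infty z^{\eta-1}e^{-sz}(\cdot)\,dz$ and swapping the order of integration, the inner $z$-integral is
\begin{equation*}
\int_0^\infty z^{\eta-\sum_j\xi_j-1}e^{-sz}\,dz=\Gamma\!\left(\eta-\sum_j\xi_j\right)s^{-\eta+\sum_j\xi_j}.
\end{equation*}
Pulling $s^{-\eta}$ out front and combining $v_j^{-\xi_j}s^{\xi_j}=(v_j/s)^{-\xi_j}$ so that each per-component kernel looks like a univariate Fox's \textit{H} kernel evaluated at $1/s$, I would match the resulting $N$-fold Mellin integrand against the definition of the multivariate Fox's \textit{H} function: the coupling factor $\Gamma(\eta-\sum_j\xi_j)$ corresponds exactly to an upper row $(1-\eta,\mathbf{1}_N)$ with $m_0=0$, $n_0=1$, $p_0=1$, $q_0=0$, while each independent $\chi_j$ slot is read off as one of the entries $(1/s,\mathbf{O}_j,\mathbf{P}_j)$, giving \eqref{eq Laplace transform of the multiplication of the H functions multivariate H form}.

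The main obstacle I expect is not the algebra but the convergence bookkeeping: justifying the Fubini exchange of the Laplace-type integral with the $N$ Mellin contours requires choosing each $L_j$ inside the strip dictated by $(\mathbf{c}_j,\mathbf{C}_j,\mathbf{d}_j,\mathbf{D}_j)$ and arranging $\mathrm{Re}(\eta-\sum_j\xi_j)>0$ along the joint contour so the Gamma integral converges absolutely; the hypotheses $s>0$ and $\eta\ge 0$, together with the defining strip conditions of each $H[z,\mathbf{O}_j,\mathbf{P}_j]$, are the ingredients that make this work. Finally, for the special case $N=1$ in \eqref{eq Laplace transform of single H functions}, the coupling factor reduces to $\Gamma(\eta-\xi_1)=\Gamma(1-(1-\eta)-1\cdot\xi_1)$, which is simply absorbed as an extra upper parameter pair $(1-\eta,1)$ prepended to $(\mathbf{c},\mathbf{C})$; this collapses the bivariate object back to the univariate $H^{m,n+1}_{p+1,q}[v/s\,|\,\cdot]$ claimed in \eqref{eq Laplace transform of single H functions}, so the same derivation yields both identities at once.
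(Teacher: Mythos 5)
Your argument is correct, but it is not the route the paper takes: the paper's proof of this lemma is a two-line citation, invoking a known integral formula (\cite[eq.~(2.1)]{forLemma11986}) together with the \textit{H} transform definition \eqref{eq H Transform definition}, and obtaining the $N=1$ case \eqref{eq Laplace transform of single H functions} by specializing \eqref{eq H_function_notation}. What you have done instead is prove that cited formula from first principles: reduce the kernel $\mathbb{H}\{\cdot,\mathbf{O}_\text{R'},\mathbf{P}_\text{R'}\}(s)$ to the generalized Laplace transform $\int_0^\infty z^{\eta-1}e^{-sz}(\cdot)\,dz$ (the prefactor $s^{1-\eta}$ in $\mathbf{P}_\text{R'}$ indeed cancels the $s^{\eta-1}$ from the kernel), expand each factor as a Mellin--Barnes integral, apply Fubini, evaluate the inner $z$-integral as $\Gamma(\eta-\sum_j\xi_j)\,s^{-\eta+\sum_j\xi_j}$, and identify the coupling Gamma with the single upper row $(1-\eta,\mathbf{1}_N)$, $n_0=p_0=1$, $q_0=0$ in \eqref{eq multivariate_H_function_definition} with per-component arguments $v_j/s$. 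I checked the bookkeeping and it matches \eqref{eq Laplace transform of the multiplication of the H functions multivariate H form} exactly, and your observation that for $N=1$ the coupling factor is absorbed as a prepended upper parameter pair $(1-\eta,1)$ correctly reproduces \eqref{eq Laplace transform of single H functions}. Your version buys self-containedness and makes explicit the one genuine technical issue --- choosing the contours $L_j$ so that $\mathrm{Re}(\eta-\sum_j\xi_j)>0$ and the interchange of integrals is justified --- which the paper's citation-style proof leaves entirely implicit; the paper's version buys brevity by outsourcing exactly this work to the reference. The only caveat is that your convergence discussion is a plan rather than a verification: to make it a complete proof one would still have to confirm that the strip conditions on $(\mathbf{c}_j,\mathbf{C}_j,\mathbf{d}_j,\mathbf{D}_j)$ (e.g.\ those in \eqref{eq Hyper H necessary conditions} for the density/MGF case actually used later) are compatible with the constraint on $\sum_j\xi_j$, but this is routine for the parameter families the lemma is applied to.
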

\begin{proof}
Using \cite[eq.(2.1)]{forLemma11986} and the \textit{H} transform definition \eqref{eq H Transform definition},  \eqref{eq Laplace transform of the multiplication of the H functions multivariate H form} can be obtained. Applying \eqref{eq H_function_notation}, \eqref{eq Laplace transform of single H functions} can be obtained.
\end{proof}
It is straightforward that the \textit{H} transform parameter $\mathbf{O}_\text{ER}$ and $\mathbf{P}_\text{ER}$ in Theorem \ref{theorem MISO_MGF_H-transform} are special cases of $\mathbf{O}_\text{R'}$ and $\mathbf{P}_\text{R'}$ in Lemma \ref{theorem Laplace transform of the multiplication of the H functions}. Hence with the MGF representation of $\gamma_j$ and Lemma \ref{theorem Laplace transform of the multiplication of the H functions}, the effective rate over i.n.i.d. MISO hyper Fox's \textit{H} fading channel can be given by the following theorem.
\begin{theorem}
\label{theorem miso ER hyper Fox's H fading}
If $N$ channels of the MISO systems are mutually independent but not necessarily identical distributed and the instantaneous channel power gain $\gamma_j$ of each channel follows hyper Fox's \textit{H} fading, then the effective rate over i.n.i.d. MISO hyper Fox's \textit{H} fading channels can be written as
\begin{equation}
\label{eq miso ER hyper Fox's H fading}
\begin{aligned}
R(\theta){=}{-}\frac{1}{A} \log_2\frac{1}{\Gamma(A)}\sum_{k_1{=}1}^{K_1}\cdots\sum_{k_N{=}1}^{K_N} H^{0{,}1}_{1{,}0}\left[\begin{matrix}
(1{-}A,\mathbf{1}_N)\\
{\textbf{---}}
\end{matrix}
{:}\right.\\
\left.\left( \frac{\rho}{N}{,}\mathbf{O}_{\phi_{j,k_j}},\mathbf{P}_{\phi_{j,k_j}}\right)_{1{,}N}\right].
\end{aligned}
\end{equation}
\end{theorem}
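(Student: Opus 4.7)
The plan is to reduce the theorem to a direct application of Theorem 1 combined with Lemma 1, exploiting channel independence to factor the joint MGF into univariate MGFs of known Fox's \textit{H} form. Invoking \eqref{eq MGF equal to prod of MGF_j} under the i.n.i.d. assumption gives $\phi_\text{end}(s) = \prod_{j=1}^{N} \phi_j(s)$, and substituting the hyper Fox's \textit{H} MGF from \eqref{eq hyper Fox's H MGF definition} for each factor yields
\begin{equation*}
\phi_\text{end}\!\left(\tfrac{\rho s}{N}\right) = \prod_{j=1}^{N} \sum_{k_j=1}^{K_j} H\!\left[\tfrac{\rho s}{N}, \mathbf{O}_{\phi_{j,k_j}}, \mathbf{P}_{\phi_{j,k_j}}\right].
\end{equation*}
Expanding the product of finite sums into an iterated sum over all index tuples $(k_1,\dots,k_N)$ and pulling that sum outside the H transform by linearity reduces Theorem 1 to the task of evaluating, for each fixed tuple, the H transform of a single product of $N$ univariate Fox's \textit{H} functions.

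For that subproblem I would apply Lemma 1 directly. The parameter pair $(\mathbf{O}_\text{ER}, \mathbf{P}_\text{ER})$ in Theorem 1 is exactly $(\mathbf{O}_\text{R'}, \mathbf{P}_\text{R'})$ of Lemma 1 specialized to $\eta = A$ with the evaluation point $s = 1$, so that the prefactor $s^{-\eta}$ collapses to unity. After absorbing the scalar $\rho/N$ into the scale parameter of each univariate Fox's \textit{H} factor so that its argument becomes exactly the integration variable, a direct invocation of \eqref{eq Laplace transform of the multiplication of the H functions multivariate H form} yields the multivariate Fox's \textit{H} function
\begin{equation*}
H^{0,1}_{1,0}\!\left[\begin{matrix}(1-A, \mathbf{1}_N)\\ \textbf{---}\end{matrix}:\left(\tfrac{\rho}{N}, \mathbf{O}_{\phi_{j,k_j}}, \mathbf{P}_{\phi_{j,k_j}}\right)_{1,N}\right],
\end{equation*}
where the $\rho/N$ enters through the $1/s$ slot of the lemma's output after the rescaling is undone. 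Reinstating the iterated sum over $(k_1,\dots,k_N)$ and the outer prefactor $-\frac{1}{A}\log_2 \frac{1}{\Gamma(A)}$ from Theorem 1 produces the claimed expression \eqref{eq miso ER hyper Fox's H fading}.

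The main obstacle I anticipate is bookkeeping rather than substantive mathematics: one must verify carefully that the scaling $s \mapsto \rho s/N$ can be cleanly absorbed into the parameter sequences $(\mathbf{O}_{\phi_{j,k_j}}, \mathbf{P}_{\phi_{j,k_j}})$ of each univariate factor, and that the multivariate Fox's \textit{H} notation $(1/s, \mathbf{O}_j, \mathbf{P}_j)_{1,N}$ of Lemma 1 then correctly encodes the resulting $\rho/N$ in the final argument slot. One should also confirm that the interchange of the finite iterated sum with the H transform integral is legitimate; this follows from the absolute convergence argument already given in the Remark after Theorem 1, applied termwise after factorization using the bound $0 \le \phi_j(s) \le 1$ for each univariate MGF.
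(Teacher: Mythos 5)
Your proposal is correct and follows essentially the same route as the paper's proof: substitute the factorized joint MGF \eqref{eq MGF equal to prod of MGF_j} and the hyper Fox's \textit{H} MGF representation \eqref{eq hyper Fox's H MGF definition} into Theorem \ref{theorem MISO_MGF_H-transform}, expand the product of finite sums, and apply Lemma \ref{theorem Laplace transform of the multiplication of the H functions} with $\eta=A$ and $s=1$ to each term. The additional bookkeeping you supply (absorbing $\rho/N$ into the scale parameters, the termwise convergence justification) is consistent with, and merely more explicit than, the paper's one-line argument.
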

\begin{proof}
Substituting both \eqref{eq hyper Fox's H MGF definition} and \eqref{eq MGF equal to prod of MGF_j} into Theorem \ref{theorem MISO_MGF_H-transform}, then using Lemma \ref{theorem Laplace transform of the multiplication of the H functions}, \eqref{eq miso ER hyper Fox's H fading} can be obtained.
\end{proof}

\emph{Remark:}
Theorem \ref{theorem miso ER hyper Fox's H fading} is a good example about how to apply Theorem \ref{theorem MISO_MGF_H-transform} to estimate the i.n.i.d. fading channels in a general way. As will be illustrated in \ref{subsection special cases}, the equation \eqref{eq miso ER hyper Fox's H fading} can be much simplified in some special cases and estimated thereafter. However, as presented in Section \ref{section approximation}, the equation \eqref{eq miso ER hyper Fox's H fading} using multivariate Fox's \textit{H} function can be evaluated uniformly without the need of further reduction or simplification, which simplify the analysis and calculation procedure in a general way.

Although many parameters have been used to describe the hyper Fox's \textit{H} fading model, according to \cite{HyperFoxHFading2012}, most commonly used fading channels have very simple parameters. 
Another interesting observation is that the parameter sequences in \eqref{eq miso ER hyper Fox's H fading} are the parameter sequences of the involved MGF functions without changes. By substituting corresponding parameters in specific channel scenarios, Theorem \ref{theorem miso ER hyper Fox's H fading} is directly applicable to the analysis of effective rate over various i.n.i.d. fading channel conditions.
In addition, the multivariate Fox's \textit{H} function is a mathmatical tracable function. There are researches on the property\cite{ref_multiH_property_1977}, reduction\cite{ref_Hfunction_expansion2013}, expansion\cite{ref_expansion_multiH1976} and integrations\cite{ref_integration_multiH1983} involving multivariate Fox's \textit{H} functions, which can be used for the simplification in special cases and derivation in applications. Interested readers can refer to the references therein for further details. 
\subsection{Special cases of i.n.i.d. hyper Fox's \textit{H} fading channels}
\label{subsection special cases}
We now investigate the effective rate over three special fading channels, namely i.n.i.d. Fox's \textit{H} fading channels, i.i.d. Nakagami-$m$ fading channels and SISO hyper Fox's \textit{H} channel. On one hand, the scenarios considered in these special cases are very practical and widely used in the study of wireless system performance. On the other, these special cases give examples of how to apply the proposed theorems under specific channel conditions. 
\subsubsection{i.n.i.d. Fox's \textit{H} fading channels}\hfill\\
\indent Fox's \textit{H} fading model can be included as a special case in the hyper Fox's \textit{H} fading model, which can characterize the fluctuations of the signal envelope due to  multipath fading superimposed on shadowing variations\cite{H_Transform2015}. Typical fading models describing both small-scale and large-scale fading effects, such as Rayleigh/lognormal, Nakagami-$m$/lognormal and Weibull/Gamma fading, are all special cases of the Fox's \textit{H} fading model. This model is extensively used in the research of fading channels due to its clear physical meaning\cite{H_Transform2015,ref_HFading1977}. Every fading effect in the Fox's \textit{H} fading is defined based on the Fox's \textit{H} variate, which uses the Fox's \textit{H} function to describe the PDF of the random variate. Hence when defining a Fox's \textit{H} variate, it only needs to define its parameter sequences in the format of $X\sim H(\mathbf{O},\mathbf{P})$. For simplicity, the Fox's \textit{H} function, Fox's \textit{H} transform and their associated operation functions are exploited, which have been detailed in Appendix \ref{subsection Notations and Symbols}. These operations are all basic manipulations of the parameter sequences, which is beneficial in compositing different fading effects as well as deriving MGF from PDF as shown below.   

For a single channel labelled with $j\,(j{=}1{,}2{,}\dots{,}N)$, let the non-negative random variable $Z_j$ and  $X_j$ be Fox's \textit{H} variates\cite{ref_HFading1977} and describe the multipath fading effect and shadowing effect, such that $Z_j\sim H(\mathbf{O}_{_{Z_j}},\mathbf{P}_{_{Z_j}})$ and $X_j\sim H(\mathbf{O}_{_{X_j}},\mathbf{P}_{_{X_j}})$ \cite{H_Transform2015}.

If the multipath fading and shadowing effects are statistically independent,
then the instantaneous channel power gain $\gamma_j$ is again Fox's \textit{H} variate $\gamma_j \sim H(\mathbf{O}_{f_j}{,}\mathbf{P}_{f_j})$,
where $\mathbf{O}_{f_j}{=} \mathbf{O}_{Z_j}\boxplus \mathbf{O}_{X_j}$ and $\mathbf{P}_{f_j}{=} \langle1{,}2{,}{{-}\frac{1}{2}}{\big|} \mathbf{P}_{Z_j}\boxplus \mathbf{P}_{X_j}$. The elementary operation $\langle \cdot | \mathbf{P}$ and convolution operation $\boxplus$ are defined in Appendix \ref{subsection Notations and Symbols}.

The MGF of the instantaneous channel power gain $\gamma_j$ can be written as 
\begin{equation}
\label{eq siso MGF definition}
\phi_{j}(s){=}H[s{,\mathbf{O}_{\phi_{j}},\mathbf{P}_{\phi_{j}}}],
\end{equation} 
with
\begin{equation}
\label{eq O phi gamma and P phi gamma definition}
\left\lbrace
\begin{aligned}
&\mathbf{O}_{\phi_{j}}=&\mathbf{O}_{\text{MGF}}\boxdot\mathbf{O}_{f_j}\hfill \\
&\mathbf{P}_{\phi_{j}}=&\mathbf{P}_{\text{MGF}}\boxdot\mathbf{P}_{f_j} \hfill
\end{aligned}\right.
\end{equation}
where Mellin operation $\boxdot$ is defined in Appendix \ref{subsection Notations and Symbols}.
Note the MGF of Fox's \textit{H} fading model is represented by only one Fox's \textit{H} function, it coincides as a special case of hyper Fox's \textit{H} fading model with the parameter $K_j=1$ in \eqref{eq Hyper Fox's H Fading or notation}. Then the effective rate over i.n.i.d. MISO Fox's \textit{H} fading channels can be given by 
\begin{equation}
\label{eq miso ER H-Fading equation}
\begin{aligned}
R(\theta)
{=}{-}\frac{1}{A} \log_2\frac{1}{\Gamma(A)} H^{0{,}1}_{1{,}0}\left[\begin{matrix}
(1{-}A,\mathbf{1}_N)\\
{\textbf{---}}
\end{matrix}
{:}\left( \frac{\rho}{N}{,}\mathbf{O}_{\phi_{j}},\mathbf{P}_{\phi_{j}}\right)_{1{,}N}\right]
\end{aligned}
\end{equation}
where $\mathbf{O}_{\phi_{j}}$ and  $\mathbf{P}_{\phi_{j}}$ are defined in \eqref{eq O phi gamma and P phi gamma definition}.

This special case gives a good example of how to derive the effective rate from the known fading parameters of the specific fading channel via the proposed method and \textit{H} transform operations. It should also be noticed that although several \textit{H} transform operations are used in the deriving procedure, they only involve some basic arithmetic manipulations of the parameters, such as addition, subtraction, multiplication, division, sequence changes of the parameters as well as the combination of such operations. One can readily obtain the more familiar yet tedious representation by expanding these operations described in Appendix \ref{subsection Notations and Symbols}.


\subsubsection{I.i.d. Nakagami-$m$ fading channels}\hfill\\
\indent Nakagami-$m$ fading model is one of the most widely used fading model in the performance analysis of wireless communication systems, which includes one-sided Gaussian and Rayleigh fading model as special cases. It has been recommended by IEEE Vehicular Technology Society Committee on Radio Propagation for theoretical studies of fading channels \cite{VTC_recommend_Weibull1988}. For a single channel, if the instantaneous channel power gain follows Nakagami-$m$ distribution, then $\gamma_j\sim H(\mathbf{O}_j,\mathbf{P}_j)$, where $\mathbf{O}_j=(1,0,0,1)$ and $\mathbf{P}_j=(\frac{{\hat{m}}}{\Gamma(\hat{m})},{\hat{m}},{\textbf{---}},\hat{m}{-}1,{\textbf{---}},1)$\cite{H_Transform2015}. The symbol $\hat{m}$ denotes the parameter associated to  Nakagami-$m$ fading model. Using \eqref{eq siso MGF definition}, the MGF can be given by 
\begin{equation}
\label{eq eg siso MGF nakagami m}
\phi_{j}(s){=}H[s{,\mathbf{O}_{\phi_{j}},\mathbf{P}_{\phi_{j}}}],
\end{equation} 
where  
$\mathbf{O}_{\phi_{j}}{=}(1,1,1,1)$ and 
$\mathbf{P}_{\phi_{j}}{=}(\frac{1}{\Gamma(\hat{m})},\frac{1}{\hat{m}},1{-}\hat{m},0,1,1)$. 
Substituting these parameters into Theorem \ref{theorem miso ER hyper Fox's H fading} as well as the relation connecting generalized Lauricella function and multivariate Fox's \textit{H} function \cite[A.31]{HFunction_thebook2010}, then using the reduction formulae for the multivariate hypergeometric function \cite[eq.(14)]{ref_reduction_hypergeometric1985}, the effective rate expression can be simply represented by generalized hypergeometric functions ${}_pF_q[\cdot]$ \cite[eq.(16.2.1)]{MathFunctions2010} as follows
\begin{equation}
\label{eq eg iid nakagami}
\begin{aligned}
R(\theta)&=-\frac{1}{A}\log_2 \,{}_2F_0\left[A,\hat{m}N,-\frac{\rho}{\hat{m}N} \right].\hfill
\end{aligned}
\end{equation}
Note that using the identity of \cite[eq.(13.6.21)]{MathFunctions2010}, the expression given in \eqref{eq eg iid nakagami} coincides with \cite[eq.(7)]{MISOFramework2012}, which supports the validation of our derivation. 
 

\subsubsection{SISO hyper Fox's \textit{H} fading channel}\hfill\\
\indent SISO fading channel is included as a special case of i.n.i.d. channel, where only one channel is considered, i.e. $N=1$. Under SISO hyper Fox's \textit{H} fading channel condition, the MGF can be expressed by the sum of univariate Fox's \textit{H} functions as $\phi_{{\text{end}}}(s)=\sum_{k{=}1}^{K} H[s,\mathbf{O}_{\phi_{k}},\mathbf{P}_{\phi_{k}}]$, where $\mathbf{O}_{\phi_{k}}=(m_{\phi_{k}}{,}n_{\phi_{k}}{,}p_{\phi_{k}}{,}q_{\phi_{k}})$ and $\mathbf{P}_{\phi_{k}}{=}(u_{\phi_{k}}{,}v_{\phi_{k}}{,}\mathbf{c}_{\phi_{k}}{,}\mathbf{d}_{\phi_{k}}{,}\mathbf{C}_{\phi_{k}}{,}\mathbf{D}_{\phi_{k}})$. In this case, using \eqref{eq ER MGF H transform form s} in Theorem \ref{theorem Laplace transform of the multiplication of the H functions} and \eqref{eq Laplace transform of single H functions} in Lemma \ref{theorem Laplace transform of the multiplication of the H functions}, the effective rate expression can be directly given by
\begin{equation}
\label{eq ER N=1 hyper Fox's H fading}
\begin{aligned}
&R(\theta){=}\\
&{-}\frac{1}{A} \log_2\sum_{k{=}1}^{K}\frac{u_{\phi_{k}}}{\Gamma(A)}H^{m_{\phi_{k}}{,}n_{\phi_{k}}+1}_{p_{\phi_{k}}+1{,}q_{\phi_{k}}}\left[\frac{v_{\phi_{k}}}{\rho}{\bigg|}\begin{matrix}
 (1{-}A{,}\mathbf{c}_{\phi_{k}}){,}(1{,}\mathbf{C}_{\phi_{k}}) \\
 \mathbf{d}_{\phi_{k}},\mathbf{D}_{\phi_{k}}
\end{matrix} \right].
\end{aligned}
\end{equation}

Note that \eqref{eq ER N=1 hyper Fox's H fading} can be also obtained by substituting $N=1$ into \eqref{eq miso ER hyper Fox's H fading}. It can be verified that  \eqref{eq ER N=1 hyper Fox's H fading} coincides with \cite[eq.(7)]{ER_SISO_MGF_H_2014}, which supports the validation of the derivation. 
%
Also in this special case, the effective rate representation in \eqref{eq ER N=1 hyper Fox's H fading} only involves univariate Fox's \textit{H} functions. 
%

\subsection{Effective rate over arbitrary correlated generalized $K$ fading channels}
If the transmit antennas are sufficiently separated in space, it is reasonable to assume independence between the received signals from different antennas. Yet this assumption may be crude for some systems, where correlation between antennas is a more practical scenario. Hence in this part, we investigate the effective rate over the arbitrary correlated generalized $K$ fading channels. 

Generalized $K$ fading model has been first introduced in \cite{generalized_K_fading1986} to model the intensity of radiatio scattered with a non-uniform phase distribution (weak-scatterer regime), which accounts for the composite effect of Nakagami-$m$ multipath fading and gamma shadowing. 
%
%
%
Here we assume the multipath fading effect and shadowing effect are independent with each other, which are represented by random variable $\omega_j$ and $\xi_j$, respectively. Let $\gamma_j=\xi_j \omega_j$ with $j=1,2,\dots,N$ and assume $\xi_j(j=1,2,\dots,N)$ are i.i.d. Gamma-distributed random variables with parameter $m_1$, while $\omega_j(j=1,2,\dots,N)$ are identically distributed with arbitrary correlation matrix $\mathbf{\Sigma}$ and parameter $m_2$. The elements of the correlation matrix are given by $\Sigma_{i,j}=1$ for $i=j$ and $\Sigma_{i,j}=r_{i,j}$ for $i\neq j$, where $0\leq r_{i,j} < 1$ is the correlation coefficient between channel $i$ and $j$. Using \cite[eq.(9)]{ref_correlated_GammaGamma} as well as the relation between Meijer's \textit{G} function and Fox's \textit{H} function \cite[eq.(1.112)]{HFunction_thebook2010}, the joint MGF of $\mathbf{{\gamma\mkern-11mu\gamma}}=[\gamma_1,\gamma_2,\dots,\gamma_N]$ can be given by
\begin{equation}
\label{eq MGF correlated GammaGamma}
\begin{aligned}
\phi_{\gamma\mkern-11mu\gamma}(s)=&\frac{\text{det}|\mathbf{W}|^{m_1}}{\Gamma(m_1)[\Gamma(m_2)]^N} \sum_{k_1,\dots,k_{N{-}1}{=}0}^{\infty}\left(\prod^{N-1}_{l=1}\frac{|p_{l,l+1}|^{2k_l}}{k_l!\Gamma(m_1{+}k_l)}\right.\\
&\times {\prod_{j=1}^{N}\mathbf{H}[s,\mathbf{O}_j,\mathbf{P}_j]}\Bigg),
\end{aligned}
\end{equation}
where $\mathbf{O}_j=(1,2,2,1)$ and $\mathbf{P}_j=(p_{j,j}^{-m_1-\alpha_j},\frac{1}{p_{j,j}m_1m_2},(1-m_1-\alpha_j,1-m_2),0,(1,1),1)$ with $\alpha_j=k_1$ for $j=1$,  $\alpha_j=k_{N-1}$ for $j=N$ and  $\alpha_j=k_{j-1}+k_j$ for $j=2,3,\dots,N-1$. $\mathbf{W}$ is the inverse of $\mathbf{\Sigma}$, whose elements are denoted by $p_{i,j}$. By substituting \eqref{eq MGF correlated GammaGamma} into Theorem \ref{theorem MISO_MGF_H-transform} and applying Lemma \ref{lemma T approximation}, the effective rate of generalized $K$ fading channels with arbitrary correlation matrix can be given by
\begin{align}
\label{eq ER correlated GammaGamma}
R(\theta)=&-\frac{1}{A}\log_2\frac{\text{det}|\mathbf{W}|^{m_1}}{\Gamma(A)\Gamma(m_1)[\Gamma(m_2)]^N}\times\\ \notag
& \sum_{k_1,k_2,\dots,k_{N-1}=0}^{\infty}\left(\prod^{N-1}_{l=1}\frac{|p_{l,l+1}|^{2k_l}}{k_l!\Gamma(m_1+k_l)}\right.\times\\ \notag
&H^{0{,}1}_{1{,}0}\Bigg[\begin{matrix}
(1{-}A,\mathbf{1}_N)\\
{\textbf{---}}
\end{matrix}
{:}\left.\left( \frac{\rho}{N}{,}\mathbf{O}_{{j}},\mathbf{P}_{{j}}\right)_{1{,}N}\Bigg]\right).
\end{align}
Note that if $\gamma_j$ are independent with each other, then $\mathbf{W}=\mathbf{I}$ and \eqref{eq ER correlated GammaGamma} reduces to the case of i.i.d. generalized $K$ fading channels as
\begin{equation}
\label{eq ER i.i.d. Generalized-K}
R(\theta)=-\frac{1}{A}\log_2 \frac{H^{0{,}1}_{1{,}0}\left[\begin{matrix}
(1{-}A,\mathbf{1}_N)\\
{\textbf{---}}
\end{matrix}
{:}\left( \frac{\rho}{N}{,}\mathbf{O}_{{j}},\mathbf{P}_{{j}}\right)_{1{,}N}\right]}{\Gamma(A)[\Gamma(m_1)\Gamma(m_2)]^N}, 
\end{equation}
where the operator sequence reduces to $\mathbf{O}_j=(1,2,2,1)$ and $\mathbf{P}_j=(1,\frac{1}{m_1m_2},(1-m_1,1-m_2),0,(1,1),1)$.

\section{Approximations}
\label{section approximation}
By substituting the parameters corresponding to the individual channels, we can obtain the analytical expression for the effective rate over hyper Fox's \textit{H} fading channels through Theorem \ref{theorem miso ER hyper Fox's H fading}. 
In this section, we propose a simple asymptotic approximation method, which provides an easy way to evaluate the representations presented in Section \ref{section ER H and hyper H}.

We commence on the following lemma.
\begin{lemma}
\label{lemma T approximation}
If let $E(Q)=\int _Q^\infty f(z)dz$ denote the truncation error and $E(M)=\int _0^Q f(z)dz-\sum _{\ell =0}^M f(\ell)$ denote the discretization error, then for  $0{\ll} Q{\ll} M$ the following relation can be obtained
\begin{equation}
\label{eq integration_approximation_ER}
\begin{aligned}
& \mathbb{H} \left\{\phi_{{\text{end}}}(\frac{\rho s}{N}),\mathbf{O}_\text{ER},\mathbf{P}_\text{ER} \right\}\big(1\big)
 \equiv \int_0^\infty e^{{-}s} s^{A{-}1}\phi(\frac{\rho s}{N})ds \\
= &\left(\frac{Q}{M}\right)^A\sum_{\ell{=}1}^M {\ell}^{A{-}1} e^{{-}\frac{\ell Q}{M}}\phi\left(\frac{\ell Q\rho}{MN}\right)+E(Q)+E(M),
\end{aligned}
\end{equation} 
where  $\mathbf{O}_\text{ER} {=} (1,0,0,1)$ and $\mathbf{P}_\text{ER}{=}(1{,}{1}{,}{\textbf{---}}{,}A{-}1{,}{\textbf{---}}{,}1)$ are effective rate parameter sequences defined in Theorem \ref{theorem MISO_MGF_H-transform}.
The truncation error $E(Q)$ and discretization error $E(M)$ can be given by
\begin{align}
\label{eq truncation_error EQ}
E(Q)&=\int_Q^\infty e^{{-}s} s^{A{-}1}\phi(\frac{\rho s}{N})ds,\hfill\\
\label{eq discretization_error EM}
E(M)&=-\frac{Q^3}{12M^2} \frac{\partial^2 e^{{-}s}s^{A{-}1}\phi(\frac{\rho s}{N})}{\partial s^2}{\bigg|}_{s=\xi},\quad0{<}\xi{<}Q,
\end{align}
and 
\begin{align}
\label{eq EQ=0}
\lim _{Q \rightarrow \infty } E(Q) = 0,
\lim _{M \rightarrow \infty } E(M) = 0.
\end{align}
\end{lemma}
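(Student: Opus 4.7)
My plan is to prove Lemma \ref{lemma T approximation} in three stages: establishing the integral representation, constructing the numerical approximation with its remainder terms, and verifying that both error terms vanish in the appropriate limits.

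First, I would establish the equivalence between the \textit{H} transform and the integral. This is essentially a restatement of the argument already given in the remark after Theorem \ref{theorem MISO_MGF_H-transform}: starting from the definition of $\mathbb{H}\{\cdot\}(\cdot)$ given in Appendix \ref{subsection Notations and Symbols} and substituting the identity $e^{-s}s^{A-1}=H^{1,0}_{0,1}[s|(A-1,1)]$ from \cite[eq.(2.22)]{HFunction_thebook2010}, the \textit{H} transform evaluated at $1$ with parameter sequences $\mathbf{O}_\text{ER}$ and $\mathbf{P}_\text{ER}$ collapses exactly to $\int_0^\infty e^{-s}s^{A-1}\phi_\text{end}(\rho s/N)\,ds$. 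Absolute convergence of this integral follows from $0\le \phi_\text{end}(s)\le 1$ combined with $\int_0^\infty e^{-s}s^{A-1}ds=\Gamma(A)<\infty$ for $A>0$.

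Next, I would derive the quadrature approximation. Writing $g(s)=e^{-s}s^{A-1}\phi_\text{end}(\rho s/N)$, I split the integration domain as $\int_0^\infty g(s)\,ds=\int_0^Q g(s)\,ds+\int_Q^\infty g(s)\,ds$, where the second term is the truncation error $E(Q)$ stated in \eqref{eq truncation_error EQ}. For the finite piece, I would apply a composite rectangular/trapezoidal rule with uniform step size $h=Q/M$ and nodes $s_\ell=\ell h$ for $\ell=1,\dots,M$. Collecting the factor $h$ and extracting $(Q/M)^{A-1}$ from $s_\ell^{A-1}=\ell^{A-1}(Q/M)^{A-1}$ gives exactly the prefactor $(Q/M)^A$ shown in \eqref{eq integration_approximation_ER} and the summand $\ell^{A-1}e^{-\ell Q/M}\phi_\text{end}(\ell Q\rho/(MN))$. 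The discretization remainder follows from the standard composite quadrature error formula: for a twice continuously differentiable integrand, the error on $[0,Q]$ with $M$ subintervals is $-(b-a)h^2/12\cdot g''(\xi)=-Q^3/(12M^2)\cdot g''(\xi)$ for some $\xi\in(0,Q)$, matching \eqref{eq discretization_error EM}.

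Finally, I would verify the limits in \eqref{eq EQ=0}. For $E(Q)\to 0$, the bound $|g(s)|\le e^{-s}s^{A-1}$ (since $0\le \phi_\text{end}\le 1$) together with the tail-integrability of $e^{-s}s^{A-1}$ implies $E(Q)\le \int_Q^\infty e^{-s}s^{A-1}ds\to 0$ as $Q\to\infty$. For $E(M)\to 0$, the function $g$ and its derivatives up to second order are continuous and uniformly bounded on the compact interval $[0,Q]$ (bounded away from $0$ by the behavior of $s^{A-1}$ for $A\ge 1$; for $0<A<1$ one first cuts off an arbitrarily small neighborhood of zero, which contributes negligibly), so $|g''(\xi)|$ stays bounded while $1/M^2\to 0$, giving $E(M)\to 0$.

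The main obstacle will be justifying the precise constant $1/12$ in the discretization error, which requires care in selecting the correct composite quadrature rule compatible with the one-sided summation in \eqref{eq integration_approximation_ER} and handling the potential mild singularity of $s^{A-1}$ at the origin when $A<1$. Everything else reduces to standard manipulations of Riemann sums and the dominated convergence bound afforded by $\phi_\text{end}\le 1$.
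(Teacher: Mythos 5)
Your proposal follows essentially the same route as the paper's own proof: obtain the integral form from the \textit{H} transform definition, split off the tail $E(Q)$ and bound it using $0\le\phi_{\text{end}}\le 1$ (the paper writes this bound as the incomplete gamma function $\Gamma(A,Q)$), and apply the composite trapezoidal rule on $[0,Q]$ to get the $-\frac{Q^3}{12M^2}g''(\xi)$ remainder and the limits. The endpoint-weight and $A<1$ singularity issues you flag as the ``main obstacle'' are real but are simply glossed over in the paper's two-line appendix proof, so if anything your treatment is the more careful one.
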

\begin{proof}
See Appendix \ref{app proof T approximation}.
\end{proof}
By applying Lemma \ref{lemma T approximation}, the multivariate Fox's \textit{H} function involved in the effective rate calculation can be approximated using the following theorem.
\begin{theorem}
\label{theorem multivariate H approximation} 
The special series multivariate Fox's \textit{H} function can be approximated by
\begin{equation}
\label{eq multivariate H approximation}
\begin{aligned}
&H^{0{,}1}_{1{,}0}\left[\begin{matrix}
(1{-}\eta,\mathbf{1}_N)\\
{\textbf{---}}
\end{matrix}{:}\left(\frac{\rho}{N}{,}\mathbf{O}_j{,}\mathbf{P}_j\right)_{1{,}N}\right]\\
{\approx}&\left(\frac{Q}{M}\right)^\eta \sum_{\ell{=}1}^M  {\ell}^{\eta{-}1} e^{{-}\frac{\ell Q}{M}}\prod_{j=1}^{N}H\left[\frac{\ell Q \rho}{MN}{,}\mathbf{O}_j{,}\mathbf{P}_j\right],
\end{aligned}
\end{equation}
providing that each $H\left[s{,}\mathbf{O}_j{,}\mathbf{P}_j\right]$ satisfies the convergence conditions of MGF functions, $\eta \ge 0$ and  $0{\ll} Q{\ll} M$.
\end{theorem}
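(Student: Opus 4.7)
The plan is to chain Lemma~\ref{theorem Laplace transform of the multiplication of the H functions} and Lemma~\ref{lemma T approximation}: first I would recast the multivariate Fox's \textit{H} function on the left-hand side of \eqref{eq multivariate H approximation} as a scalar integral, and then apply the quadrature rule of Lemma~\ref{lemma T approximation} to that integral.

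For the first step, Lemma~\ref{theorem Laplace transform of the multiplication of the H functions} provides a precise identity between the \textit{H}-transform of a product of univariate Fox's \textit{H} functions and the multivariate Fox's \textit{H} function appearing on the left of \eqref{eq multivariate H approximation}. Specializing the transform variable in \eqref{eq Laplace transform of the multiplication of the H functions multivariate H form} so that the argument inside each coordinate equals $\rho/N$, then expanding the \textit{H}-transform into its integral representation via the same $H^{1,0}_{0,1}$ identity used to pass from \eqref{eq ER MGF H transform form s} to \eqref{eq numerical_MGF_expression_s}, and tracking the resulting prefactors, I would obtain
\begin{equation*}
H^{0,1}_{1,0}\!\left[\begin{matrix}(1{-}\eta,\mathbf{1}_N)\\ \textbf{---}\end{matrix}\!:\!\left(\tfrac{\rho}{N},\mathbf{O}_j,\mathbf{P}_j\right)_{1,N}\right]=\int_0^\infty e^{-s}s^{\eta-1}\prod_{j=1}^N H\!\left[\tfrac{\rho s}{N},\mathbf{O}_j,\mathbf{P}_j\right]ds.
\end{equation*}
This collapses the multivariate object into a scalar integral of exactly the shape covered by Lemma~\ref{lemma T approximation}.

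For the second step, I would identify the product $\prod_{j=1}^N H[\rho s/N,\mathbf{O}_j,\mathbf{P}_j]$ with $\phi_\text{end}(\rho s/N)$ in \eqref{eq integration_approximation_ER}. The hypothesis that each factor $H[s,\mathbf{O}_j,\mathbf{P}_j]$ satisfies the MGF convergence conditions guarantees that the product is bounded by $1$ on $[0,\infty)$ and decays sufficiently fast, which is precisely the property used to justify the absolute convergence of the integral in \eqref{eq numerical_MGF_expression_s}. Substituting $A\mapsto\eta$ in \eqref{eq integration_approximation_ER} therefore replaces the integral above by the finite sum on the right-hand side of \eqref{eq multivariate H approximation} modulo the errors $E(Q)$ and $E(M)$, both of which tend to zero as $Q,M\to\infty$ by \eqref{eq EQ=0}.

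I expect the only real obstacle to be bookkeeping. The auxiliary transform variable appearing in Lemma~\ref{theorem Laplace transform of the multiplication of the H functions} must be matched simultaneously against the fixed scaling $\rho/N$ in \eqref{eq multivariate H approximation} and against the integration variable of Lemma~\ref{lemma T approximation}; the $s^{1-\eta}$ prefactor encoded in $\mathbf{P}_\text{R'}$ must be absorbed through a change of variable that converts the kernel $e^{-sz}z^{\eta-1}$ into the kernel $e^{-s}s^{\eta-1}$ used in \eqref{eq integration_approximation_ER}. Once the two parameterizations are aligned, the theorem follows by direct substitution, with no further analytical content beyond Lemmas~\ref{theorem Laplace transform of the multiplication of the H functions} and~\ref{lemma T approximation}.
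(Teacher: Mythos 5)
Your proposal is correct and follows exactly the paper's route: the paper's own proof is the one-line "Applying Lemma~\ref{lemma T approximation} to Lemma~\ref{theorem Laplace transform of the multiplication of the H functions}," i.e., use Lemma~\ref{theorem Laplace transform of the multiplication of the H functions} to rewrite the multivariate Fox's \textit{H} function as the scalar integral $\int_0^\infty e^{-s}s^{\eta-1}\prod_{j}H[\rho s/N,\mathbf{O}_j,\mathbf{P}_j]\,ds$ and then discretize it with the truncated trapezoidal rule of Lemma~\ref{lemma T approximation}. Your additional bookkeeping about aligning the transform variable with the fixed scaling $\rho/N$ and identifying the product with $\phi_\text{end}$ is exactly the (unstated) detail the paper glosses over, so nothing is missing.
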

\begin{proof}
Applying Lemma \ref{lemma T approximation} to Lemma \ref{theorem Laplace transform of the multiplication of the H functions}, \eqref{eq multivariate H approximation} can be obtained.
\end{proof}
\emph{Remark:}
Theorem \ref{theorem multivariate H approximation} shows that the special multivariate Fox's \textit{H} function can be evaluated by the product of univariate Fox's \textit{H} functions, where there are already methods for the evaluation of univariate Fox's \textit{H} function in numerical software like Matlab. It should be noticed that although we limit the conditions of each univariate Fox's \textit{H} function involved in \eqref{eq multivariate H approximation} to be satisfying the MGF functions' convergence conditions, in fact the conditions can be further relaxed, for example when each univariate Fox's \textit{H} function approaches 0 as ${s\rightarrow \infty}$ and the second partials are bounded for all non-negative real number $s$. 

Hence in this way, we can get the general approximation formula for the MGF based approach in the following theorem. 
\begin{theorem}
\label{theorem trapezodial_approx ER}
The effective rate over arbitrary correlated and not necessarily identical MISO fading channels can be approximated by finite summation of the MGF as the following equation 
\begin{equation}
\label{eq trapezodial_approx_overall}
R(\theta){\approx}\frac{\log_2\Gamma(A)}{A}{-}\frac{1}{A}\log_2\left(\frac{Q}{M}\right)^{\!A}\!\sum^M_{\ell{=}1}  \ell^{A{-}1}e^{{-}\frac{\ell Q}{M}}\phi_{{\text{end}}}(\frac{\ell Q \rho}{MN}),
\end{equation}
where $0{\ll} Q{\ll} M$ and
the truncation error as well as the discretization error is given by \eqref{eq truncation_error EQ} and \eqref{eq discretization_error EM}.
\end{theorem}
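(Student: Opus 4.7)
The plan is to combine Theorem \ref{theorem MISO_MGF_H-transform} with Lemma \ref{lemma T approximation} in a direct way; no new analytic work is required, only bookkeeping of constants and errors.

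First, I would invoke Theorem \ref{theorem MISO_MGF_H-transform} (specifically \eqref{eq ER MGF H transform form s}) and split the logarithm of the product into a sum so that the $\Gamma(A)^{-1}$ prefactor is pulled outside the $\mathbb{H}\{\cdot\}$:
\begin{equation*}
R(\theta) \;=\; \frac{\log_2 \Gamma(A)}{A}\;-\;\frac{1}{A}\log_2 \mathbb{H}\!\left\{\phi_{\text{end}}\!\left(\tfrac{\rho s}{N}\right),\mathbf{O}_\text{ER},\mathbf{P}_\text{ER}\right\}\!(1).
\end{equation*}
This isolates the quantity that Lemma \ref{lemma T approximation} is designed to handle.

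Second, I would apply Lemma \ref{lemma T approximation} directly to the H-transform inside the logarithm. The lemma supplies the identity
\begin{equation*}
\mathbb{H}\!\left\{\phi_{\text{end}}\!\left(\tfrac{\rho s}{N}\right),\mathbf{O}_\text{ER},\mathbf{P}_\text{ER}\right\}\!(1)
=\left(\tfrac{Q}{M}\right)^{\!A}\sum_{\ell=1}^M \ell^{A-1} e^{-\ell Q/M}\phi_{\text{end}}\!\left(\tfrac{\ell Q\rho}{MN}\right)+E(Q)+E(M),
\end{equation*}
with the two error pieces already characterised explicitly by \eqref{eq truncation_error EQ} and \eqref{eq discretization_error EM}, and vanishing in the limit by \eqref{eq EQ=0}. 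Dropping $E(Q)+E(M)$ and substituting the remaining finite sum into the display of the previous paragraph gives \eqref{eq trapezodial_approx_overall} verbatim; the quoted error formulas carry over unchanged because they are exactly those of Lemma \ref{lemma T approximation}.

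The main point that needs a brief justification, rather than an obstacle proper, is that the log of the finite sum is well-defined: since $\phi_{\text{end}}(\cdot)\ge 0$, the weight $e^{-\ell Q/M}\ell^{A-1}>0$, and $\phi_{\text{end}}(0)=1$ guarantees at least the $\ell=1$ term contributes a strictly positive amount, the sum is positive for any $0\ll Q\ll M$. With the positivity settled, the derivation is complete; no convergence or interchange arguments beyond those already established in Lemma \ref{lemma T approximation} are needed, which is precisely the advantage of having routed the effective-rate expression through the $\mathbb{H}$-transform in Theorem \ref{theorem MISO_MGF_H-transform}.
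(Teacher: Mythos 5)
Your proposal is correct and follows essentially the same route as the paper, which likewise substitutes the trapezoidal identity \eqref{eq integration_approximation_ER} of Lemma \ref{lemma T approximation} into the integral/$H$-transform representation from Theorem \ref{theorem MISO_MGF_H-transform} and then separates the $\log_2\Gamma(A)$ term by elementary algebra. Your added remark on the positivity of the finite sum (so that the logarithm is well defined) is a small but harmless refinement not spelled out in the paper.
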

\begin{proof}
Substituting \eqref{eq integration_approximation_ER} into \eqref{eq numerical_MGF_expression_s} and after some simple algebra manipulation, the desired result can be obtained.
\end{proof}
Compared to the exact representation (6) proposed in Theorem 1, the approximated representation (31) in Theorem 4 is also very attractive.
This representation only involves the summation of finite terms, which may help to reduce the computational complexity at the cost of accuracy. But as shown in Section \ref{section simulation}, the asymptotic approximation converges quickly since the exponential fading form is involved, where $Q=15$ and $\frac{M}{Q}=300$ provides good fit to both analytical and simulation results.

\emph{Remark:}
Applying Theorem \ref{theorem multivariate H approximation}, the effective rate of i.n.i.d. MISO hyper Fox's \textit{H} fading channels given by \eqref{eq miso ER hyper Fox's H fading} in Theorem \ref{theorem miso ER hyper Fox's H fading} can be approximated as
\begin{equation}
\label{eq approximated hyper Fox's H fading}
\begin{aligned}
R(\theta){\approx}& -\frac{1}{A} \log_2\frac{1}{\Gamma(A)}\sum_{k_1{=}1}^{K_1}\cdots\sum_{k_N{=}1}^{K_N} 
\left(\frac{Q}{M}\right)^A \\
&\times \sum_{\ell{=}1}^M  \bigg{(}{\ell}^{A{-}1} 
 e^{{-}\frac{\ell Q}{M}} \prod_{j=1}^{N} H\left[\frac{\ell Q \rho}{MN}{,}\mathbf{O}_{\phi_{j,k_j}}{,}\mathbf{P}_{\phi_{j,k_j}}\right]\bigg{)}.
\end{aligned}
\end{equation}

\section{Numerical Results}
\label{section simulation}
In order to verify the proposed MGF approach and approximation method in this paper, simulations under different fading scenarios are carried out, where the parameters are listed in Table \ref{table inid_parameters}. Without loss of generality, the duration of a time block $T=1$ ms and the bandwidth of the system $B=1$ kHz have been assumed. For each simulated scenario, we have run $10^7$ trails for simulating each channel condition with unit power. Since the parameters in hyper Fox's \textit{H} fading model have physical meanings only when specific fading models are considered, generalized $K$ fading model and Weibull/Gamma fading model have been used as examples to validate the proposed MGF method, which are all special cases of hyper Fox's \textit{H} fading model. These two fading models are very practical, which can characterize large scale and small scale fading effect simultaneously, and include many practical fading models as special cases, such as Rayleigh, Nakagami-$m$ and Weibull fading model. These two fading models have been proved to fit measurements in various channel conditions and extensivley used in the study of wireless communication systems\cite{generalized_K_fading1986,Wbl_Gamma_fading2009}. 

For a single channel, the instantaneous channel power gain parameter sequences for generalized $K$ fading model are \cite[Table IX]{H_Transform2015} $\mathbf{O}_{\gamma}=(2,0,0,2)$ and $\mathbf{P}_{\gamma}=(\frac{\hat{m}}{\psi \Gamma(\hat{m}) \Gamma(1/\psi)},\frac{\hat{m}}{\psi},{\textbf{---}},(\hat{m}-1,\frac{1}{\psi}-1),{\textbf{---}},\mathbf{1}_2)$, where multipath fading severity parameter $\hat{m} \geq \frac{1}{2}$ and shadowing figure $\psi\in [0,2]$. Specially, when $\psi=0$, the generalized $K$ fading reduces to the Nakagami-$m$ fading. 
The instantaneous channel power gain parameter sequences for Weibull/Gamma fading model are \cite[Table IX]{H_Transform2015} $\mathbf{O}_{\gamma}=(2,0,0,2)$ and $\mathbf{P}_{\gamma}=(\frac{\Gamma(1+2/\beta)}{\psi  \Gamma(1/\psi)},\frac{\Gamma(1+2/\beta)}{\psi},{\textbf{---}},(\frac{1}{\psi}-1,1-\frac{2}{\beta}),{\textbf{---}},(1,\frac{2}{\beta}))$, where the fading severity parameter $\beta > 0$ and shadowing figure $\psi \geq 0$. This model reduces to Weibull fading when $\psi=0$ and includes $K$ fading as special case when $\beta=2$. 
The exact analytical effective rate representation for i.n.i.d. conditions can be directly obtained by substituting the above parameters into the proposed MGF based approach in Theorem \ref{theorem miso ER hyper Fox's H fading}. The univariate Fox's \textit{H} function is evaluated using the method proposed in \cite{H_calculation2009}. When multivariate Fox's \textit{H} functions are involved, they are estimated using the proposed approximation method in Theorem \ref{theorem multivariate H approximation} 
while the analytical values are evaluated using the numerical method in \cite{ref_multiH_evaluation_python2016}. 

\begin{table}[!hbtp]
\centering
\caption{Parameters for different fading scenarios}
\label{table inid_parameters}
\resizebox{0.48\textwidth}{!}{%
\begin{tabular}{|l|c|c|l|}
\hline
\multicolumn{2}{|c|}{Scenarios}                 & Fading Model    & Parameters            \\ \hline
SISO                     & $N$=1                  & Weibull/Gamma   & $\beta$=3,\,\,\,\,$\psi$=1   \\ \hline
I.i.d.                    & $N$=9                  & Generalized $K$ & $\hat{m}$=2, $\psi$=$\frac{1}{1.45}$    \\ \hline
\multirow{5}{*}{i.n.i.d.} & \multirow{2}{*}{$N$=2} & Weibull/Gamma   & $\beta$=3,\,\,\,\,$\psi$=1   \\ \cline{3-4} 
                         &                      & Weibull/Gamma   & $\beta$=2,\,\,\,\,$\psi$=0.5 \\ \cline{2-4} 
                         & \multirow{3}{*}{$N$=3} & Weibull/Gamma   & $\beta$=3,\,\,\,\,$\psi$=1   \\ \cline{3-4} 
                         &                      & Weibull/Gamma   & $\beta$=2,\,\,\,\,$\psi$=0.5 \\ \cline{3-4} 
                         &                      & Generalized $K$ & $\hat{m}$=2, $\psi$=0.5       \\ \hline 
\multirow{2}{*}{Correlated} & \multirow{2}{*}{$N$=2} & \multirow{2}{*}{Correlated generalized $K$} & \multirow{2}{*}{\begin{tabular}[c]{@{}l@{}}$\hat{m}$=1, $\psi$=1\\ r=0, 0.5, 0.8 \\ \end{tabular}} \\ 
{}&{}&{}&{}\\\hline 
\end{tabular}%
}
\end{table}


\begin{figure}[!htbp]
\centering
\includegraphics[width=0.48\textwidth]{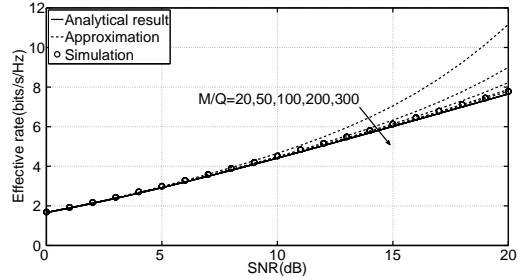}
\caption{Effective rate using different discretization and truncation ratio $\frac{M}{Q}$ compared with analytical results and simulation results in i.i.d. generalized $K$ fading scenario.}
\label{fig_M2Q_v1}
\end{figure} 

The i.i.d. conditions are special cases in i.n.i.d. conditions. In order to verify the proposed approximation methods with existing results, the analytical results are estimated using \cite[eq.(53)]{MISOFramework2012} and the parameters used in \cite{MISOFramework2012} are exploited. When approximating the effective rate using \eqref{eq trapezodial_approx_overall} in Theorem \ref{theorem trapezodial_approx ER}, the truncation parameter $Q$ can be chosen by a small number in practice, which is due to the fact that the estimated integration involving exponential fading terms. From the simulation experiments, we find that $Q=15$ is good enough for the estimation purpose, where further increase of the value of $Q$ will not give more accurate results. By increasing the ratio of the discretization parameter $M$ and the truncation parameter $Q$, the approximation approaches the exact value, as shown in Fig. \ref{fig_M2Q_v1}. It is shown that when $M/Q\geq 200$, the approximations are tight with the simulation as well as analytical results, which supports the validation of the proposed approximation method. In the following part, $M/Q$ is selected as 300 as default, which gives good accuracy as well as low computational complexity. 

\begin{figure}[!htbp]
\centering
\includegraphics[width=0.48\textwidth]{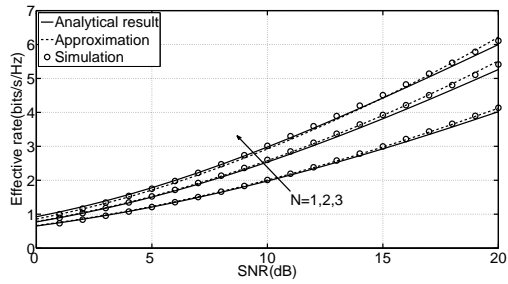}
\caption{Effective rate over SISO and i.n.i.d. MISO fading channels, where parameters for each channel are listed in Table \ref{table inid_parameters}.}
\label{fig_ER_inid}
\end{figure}

\begin{figure}[!htbp]
\centering
\includegraphics[width=0.48\textwidth]{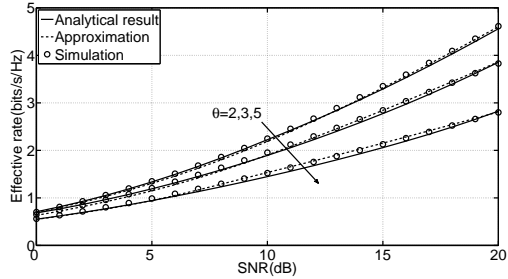}
\caption{Effective rate under different QoS exponential $\theta$ in i.n.i.d. Weibull/Gamma fading channels with $N=2$.}
\label{fig_ER_A}
\end{figure}

In order to test the performance of the proposed methods under i.n.i.d. MISO fading channel conditions, different fading parameters and channel numbers are used, which are detailed in Table \ref{table inid_parameters}. For different scenarios, it is shown in Fig. \ref{fig_ER_inid} that the approximations are sufficiently tight across a wide range of SNR (in this case from 0 to 20 dB) under different scenarios. Since the effective rate over i.n.i.d. MISO fading channels can be estimated based on the product of the individual channel's MGF as presented in Thereom \ref{theorem trapezodial_approx ER}, the proposed MGF based approach is flexible and easy to extend.

Effective rate under different QoS exponent $\theta$ is the most interested parameter in the applications, since it can be used as a metric in the QoS provisioning schemes. Larger QoS exponent $\theta$ corresponds to tighter delay constraint. As shown in Fig. \ref{fig_ER_A}, the maximum available data rate drops with the increase of QoS exponent $\theta$, in order to guarantee the system's delay performance. Also when the SNR gets higher, the same increase of $\theta$ results in greater drop of the effective rate. 



\begin{figure}[!htbp]
\centering
\includegraphics[width=0.48\textwidth]{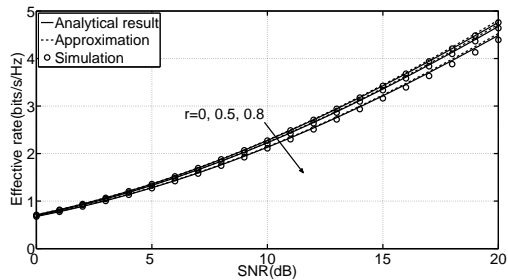}
\caption{Effective rate under correlated generalized $K$ fading channel with different correlation coefficients $r=0,0.5,0.8$ and $N=2$, where approximation results are compared to analytical results and simulation results.}
\label{fig_ER_correlated}
\end{figure}
Furthermore, the effective rate over correlated generalized $K$ fading channels is investigated, whose channel fading parameters are given in Table \ref{table inid_parameters}. It is shown in Fig. \ref{fig_ER_correlated} that the effective rate reduces as the correlation coefficient $r$ increases from 0 to 0.8, where the case of $r=0$ corresponds to i.i.d. generalized $K$ fading channels. 



\section{Conclusion}
\label{section conclusion}
In this paper, a new MGF based approach for the effective rate analysis over arbitrary correlated and not necessarily identical MISO fading channels using \textit{H} transform representation has been proposed. The proposed framework has simplified the representation and analyses of effective rate in a unified way. The effective rate over i.n.i.d. hyper Fox's \textit{H} fading channels as well as arbitrary correlated generalized $K$ fading channels has been investigated, which has given good demonstrations on the application of the proposed MGF based approaches. Based on these results, the effective rate over many practical fading channels can be obtained by simply substituting the corresponding parameters instead of the cumbersome case-by-case integration procedure.

In addition, approximations have been given for the MGF based effective rate representation as well as the effective rate representation over i.n.i.d. MISO hyper Fox's \textit{H} fading channels, where both the truncation error and discretization error have been studied. These results have been illustrated readily applicable to practical fading channels, such as Weibull/Gamma and generalized $K$ fading channels. The simulations have been used to show the validation and accuracy of the proposed analytical and approximation methods. These results have extended and complemented the existing research of  effective rate analysis.

We highlight that, as various metrics in wireless communication networks can be represented in similar \textit{H} transform format and multivariate Fox's \textit{H} functions have the possibility to characterize the statistical properties of both independent and correlated channels, the obtained results will be also valuable to the performance analyses of other statistical metrics in wireless systems.

\appendix
\subsection{Fox's \textit{H} function, multivariate Fox's \textit{H} function and \textit{H} transform}
\label{subsection Notations and Symbols}
%
The Fox's \textit{H} function \cite[Ch. 1.2]{HFunction_thebook2010} $H^{m{,}n}_{p{,}q}\left[\cdot\right]$, or univariate Fox's \textit{H} function in order to distinguish from multivariate Fox's \textit{H} function, can be defined by a single Mellin-Barnes type of contour integral as \cite[Ch. 1.2]{HFunction_thebook2010}
\begin{equation}
\label{eq H_function_notation}
H\left[s,\mathbf{O},\mathbf{P} \right]{=}uH^{m{,}n}_{p{,}q}\left[vs{\bigg|}
\begin{matrix}
\mathbf{c},\mathbf{C}\\
\mathbf{d},\mathbf{D}
\end{matrix}
\right]
{=}\frac{u}{2\pi i}\oint_\mathcal{L} \Phi \left(z\right) (vs)^z dz,
\end{equation}
where $i=\sqrt{-1}$, $s\neq 0$ and $\mathcal{L}$ is a suitable contour. The following notation is used for simplicity,
\begin{equation}
\label{eq HFunction_parameters}
\left\lbrace
\begin{aligned}
\mathbf{O}&=\left(m,n,p,q\right)\\
\mathbf{P}&=\left(u,v,\mathbf{c},\mathbf{d},\mathbf{C},\mathbf{D}\right)
\end{aligned}\right.
\end{equation}
where $\mathbf{c}{=}(\dot{\mathbf{c}}{,}\ddot{\mathbf{c}})$, $\mathbf{d}{=}(\dot{\mathbf{d}},\ddot{\mathbf{d}})$, $\mathbf{C}{=}(\dot{\mathbf{C}}{,}\ddot{\mathbf{C}})$ and $\mathbf{D}{=}(\dot{\mathbf{D}}{,}\ddot{\mathbf{D}})$, with
$\dot{\mathbf{c}}{=}(c_1{,}\dots{,}c_n)$, $\ddot{\mathbf{c}}{=}(c_{n+1}{,}\dots{,}c_p)$, 
$\dot{\mathbf{d}}{=}(d_1{,}\dots{,}d_m)$, $\ddot{\mathbf{d}}{=}(d_{m+1}{,}\dots{,}d_q)$,
$\dot{\mathbf{C}}{=}(C_1{,}\dots{,}C_n)$, $\ddot{\mathbf{C}}{=}(C_{n+1}{,}\dots{,}C_p)$,
$\dot{\mathbf{D}}{=}(D_1{,}\dots{,}D_m)$ and $\ddot{\mathbf{D}}{=}(D_{m+1}{,}\dots{,}D_q)$. Also
\begin{equation}
\label{eq theta_definition}
\Phi \left(z\right){=}\frac{\prod _{j{=}1}^m \Gamma \left( d_j{-}D_jz \right)\prod _{j{=}1}^n \Gamma \left( 1{-}c_j{+}C_jz \right)}{\prod _{j{=}n{+}1}^p \Gamma \left( c_j{-}C_jz \right)\prod _{j{=}m{+}1}^q \Gamma \left( 1{-}d_j{+}D_jz \right)}.
\end{equation}

Typical operations used in \textit{H} transform are defined as follows\cite{H_Transform2015}. The convolution operation is defined by $\mathbf{O}_1 \boxplus \mathbf{O}_2=(m_1+m_2,n_1+n_2,p_1+p_2,q_1+q_2)$ and $\mathbf{P}_1 \boxplus \mathbf{P}_2=(u_1u_2,v_1v_2,\mathbf{c}',\mathbf{d}',\mathbf{C}',\mathbf{D}')$, where $\mathbf{c}'=(\dot{\mathbf{c}}_1,\mathbf{c}_2,\ddot{\mathbf{c}}_1)$, 
$\mathbf{d}'=(\dot{\mathbf{d}}_1,\mathbf{d}_2,\ddot{\mathbf{d}}_1)$,
$\mathbf{C}'=(\dot{\mathbf{C}}_1,\mathbf{C}_2,\ddot{\mathbf{C}}_1)$ and 
$\mathbf{D}'=(\dot{\mathbf{D}}_1,\mathbf{D}_2,\ddot{\mathbf{D}}_1)$.
The Mellin operation is defined by 
$\mathbf{O}_1\boxdot\mathbf{O}_2=(m_1+n_2,m_2+n_1,p_1+q_2,p_2+q_1)$ and 
$\mathbf{P}_1 \boxdot \mathbf{P}_2=(\frac{u_1u_2}{v_2},\frac{v_1}{v_2},\mathbf{c}'',\mathbf{d}'',\mathbf{C}'',\mathbf{D}'')$, 
where 
$\mathbf{c}''=(\dot{\mathbf{c}}_1,\mathbf{1}_{q_2}-\mathbf{d}_2-\mathbf{D}_2,\ddot{\mathbf{c}}_1)$, 
$\mathbf{d}''=(\dot{\mathbf{d}}_1,\mathbf{1}_{p_2}-\mathbf{c}_2-\mathbf{C}_2,\ddot{\mathbf{d}}_1)$,
$\mathbf{C}''=(\dot{\mathbf{C}}_1,\mathbf{D}_2,\ddot{\mathbf{C}}_1)$ and
$\mathbf{D}''=(\dot{\mathbf{D}}_1,\mathbf{C}_2,\ddot{\mathbf{D}}_1)$. 
Moreover, the scaling operation $\mathbf{P}|\alpha \rangle $ is defined by $\left(u/\alpha,v/\alpha,\mathbf{c},\mathbf{d},\mathbf{C},\mathbf{D} \right)$, 
while the elementary operation $\langle \alpha,\beta,\gamma | \mathbf{P}$ is defined by $\left(u/{(\alpha v)}^{\beta\gamma},{(\alpha v)}^{\beta},\mathbf{c}+\beta\gamma\mathbf{C},\mathbf{d}+\beta\gamma\mathbf{D}, \beta \mathbf{C},\beta \mathbf{D} \right)$. Interested reader should refer to \cite{H_Transform2015} for full details.%

The operator $H^{0{,}n_0{;}m_1{,}n_1{;}\dots{;}m_N{,}n_N}  _{p_0{,}q_0{;}p_1{,}q_1{;}\dots{;}p_N{,}q_N}\left[\cdot\right]$ is Fox's \textit{H} function of $N$ variables\cite[Appendix A.1]{HFunction_thebook2010}, or multivariate Fox's \textit{H} function for simplicity, which can be defined in terms of multiple Mellin-Barnes type contour integrals as 
\begin{align}
\label{eq multivariate_H_function_definition}
&H^{0{,}n_0}_{p_0{,}q_0}\!\left[ \begin{matrix}
(a_j{;}A_j^{\left(1\right)}{,}{\dots}{,}A_j^{\left(N\right)})_{1{,}p_0}\\
(b_j{;}B_j^{\left(1\right)}{,}{\dots}{,}B_j^{\left(N\right)})_{1{,}q_0}
\end{matrix}
{:}\left(s_j{,}\mathbf{O}_j{,}\mathbf{P}_j\right)_{1{,}N}\right]\\ \notag
{=}&\frac{u_1{\cdots}u_N}{(2\pi i)^N} \int_{\mathcal{L}_{1}}\!\!\!{\cdots}\!\!\!\int_{\mathcal{L}_N} \! \Psi(\boldsymbol{\zeta})\left\lbrace \prod ^N_{j=1} \Phi_j(\zeta_j)(v_j s_j)^{\zeta_j} \right\rbrace d\zeta_1{\cdots}d\zeta_N,
\end{align}
%
 where
\begin{equation}
\label{eq Psi_definition}
\begin{aligned}
\Psi(\boldsymbol{\zeta}){=}
\frac{\prod \limits _{j{=}1}^{n_0} \!\Gamma(1{-}a_j{+}\sum \limits_{\ell{=}1}^N \! A_j^{(\ell)}\zeta_\ell)}{\prod \limits _{j{=}n_0{+}1}^{p_0}\! \Gamma(a_j{-}\!\sum \limits _{\ell{=}1}^N A_j^{(\ell)}\!\zeta_\ell)\prod \limits_{j{=}1}^{q_0} \!\Gamma(1{-}b_j{+}\!\sum\limits_{\ell{=}1}^N B_j^{(\ell)}\zeta_\ell)},
\end{aligned}
\end{equation}
\begin{equation}
\label{eq Phi_definition}
\begin{aligned}
\Phi_j(\zeta_j){=}
\frac{\prod \limits _{\ell{=}1}^{m_j} \Gamma(d_\ell^{(j)}{-}D_\ell^{(j)}\zeta_j)\prod \limits _{\ell{=}1}^{n_j} \Gamma(1{-}c_\ell^{(j)}{+}C_\ell^{(j)}\zeta_j)}{\prod \limits _{\ell{=}n_j{+}1}^{p_j} \Gamma(c_\ell^{(j)}{-}C_\ell^{(j)}\zeta_j)\prod \limits _{\ell{=}m_j{+}1}^{q_j} \Gamma(1{-}d_\ell^{(j)}{+}D_\ell^{(j)}\zeta_j)},
\end{aligned}
\end{equation}
whereas $\mathcal{L}_j$ is the suitable contours in the $\zeta_j$-plane. $(a_j;A_j^{(1)},\dots,A_j^{\left(N\right)})_{1,p_0}$ abbreviates $p_0$-parameter array $(a_1;A_1^{\left(1\right)},\dots,A_1^{\left(N\right)}),\dots,(a_{p_0};A_{p_0}^{\left(1\right)},\dots,A_{p_0}^{\left(N\right)})$, and
 $\left(s_j,\mathbf{O}_j,\mathbf{P}_j\right)_{1,N}$ abbreviates $N$-parameter array $\left\{ s_1,\mathbf{O}_1,\mathbf{P}_1;\cdots;s_N,\mathbf{O}_N,\mathbf{P}_N \right\}$, where $\mathbf{O}_j=(m_j,n_j,p_j,q_j)$ and\\
  $\mathbf{P}_j=(u_j,v_j,\mathbf{c}^{(j)},\mathbf{d}^{(j)},\mathbf{C}^{(j)},\mathbf{D}^{(j)})$,
whereas $\mathbf{c}^{(j)}$ abbreviates $p_j$-parameter array $(c_1^{(j)},\dots,c_{p_j}^{(j)})$. Other abbreviations follow the same way. See \cite{HFunction_thebook2010} for more related details.

Specially, when $ n_0 = p_0 = q_0 =0 $, the multivariate Fox's \textit{H} function breaks up into the product of $N$ univariate Fox's \textit{H} functions as \cite[eq.(A.13)]{HFunction_thebook2010}
\begin{equation}
\label{eq definition of product of H_functions}
\begin{aligned}
H^{0{,}0}_{0{,}0}\left[
\begin{matrix}
{\textbf{---}}\\
{\textbf{---}}
\end{matrix}
{:}\left(s_j{,}\mathbf{O}_j{,}\mathbf{P}_j\right)_{1{,}N}\right]
{=}\prod _{j{=}1}^N H\left[s_j{,}\mathbf{O}_j{,}\mathbf{P}_j\right].
\end{aligned}
\end{equation}

The \textit{H} transform of a function $f(z)$ with Fox's \textit{H} kernel is defined by \cite{H_Transform2015}
\begin{equation}
\label{eq H Transform definition}
\mathbb{H}\left\{f(z),\mathbf{O},\mathbf{P}\right\}(s)
{=}u\int_0^\infty \! H_{p{,}q}^{m{,}n}\left[vsz{\bigg|}\begin{matrix}
\mathbf{c},\mathbf{C}\\
\mathbf{d},\mathbf{D}
\end{matrix}\right]f(z)dz,
\end{equation}
given that the integral converges absolutely and $s>0$.
\subsection{Proof of Theorem \ref{theorem MISO_MGF_H-transform}}
\label{app proof MISO_MGF_H-transform}
Using \cite[eq.(28)]{H_Transform2015} with the identity of \cite[eq.(1.39)]{HFunction_thebook2010}, \cite[eq.(1.43)]{HFunction_thebook2010} and \cite[eq.(2.22)]{HFunction_thebook2010} we have
\begin{equation}
\label{eq MGF_identity}
\begin{aligned}
&H^{1{,}1}_{1{,}1}\left[\frac{\rho\sum_{j=1}^N\gamma_j}{N}{\bigg|}\begin{matrix}
1{-}A{,}1\\
0{,}1
\end{matrix}\right]\\
{=}&\int_0^\infty\!\! H^{1{,}0}_{0{,}1}\left[s{\bigg|}\begin{matrix}
{\textbf{---}}\\
A{-}1{,}1
\end{matrix}\right]
H^{1{,}0}_{0{,}1}\left[\frac{\rho s\sum_{j=1}^N\gamma_j}{N}{\bigg|}\begin{matrix}
{\textbf{---}}\\
0{,}1
\end{matrix}\right]ds.
\end{aligned}
\end{equation}
Substituting \cite[eq.(1.43)]{HFunction_thebook2010} into \eqref{eq ER_theory} and using \eqref{eq MGF_identity} as well as \cite[eq.(62)]{H_Transform2015}, the following equation can be obtained
\begin{equation}
\label{eq ER_MGF_int}
\begin{aligned}
R(\theta)
&{=}{-}\frac{1}{A} \log_2 \frac{1}{\Gamma(A)} \int_0^\infty H^{1{,}0}_{0{,}1}\left[s{\bigg|}\begin{matrix}
{\textbf{---}}\\
A{-}1{,}1
\end{matrix}\right] \phi_{\text{end}}\left(\frac{\rho s}{N}\right) ds.
\end{aligned}
\end{equation}
Then use the definition of \textit{H} transform \eqref{eq H Transform definition}, \eqref{eq ER MGF H transform form s} can be achieved. By changing the integral variate, an alternative form can be obtained as \eqref{eq ER MGF H transform form identical}.
%
%
%

%
\subsection{Proof of Lemma \ref{lemma T approximation}}
\label{app proof T approximation}
%
Using the \textit{H} transform definition \eqref{eq H Transform definition}, the integration form can be obtained.
First we prove that the integration can be truncated. 
Use the fact $\frac {\partial\phi(s)}{\partial s}<0$ for $s\in[0,\infty)$, the truncation error $E(Q)$ can be upper bounded by
\begin{equation}
\label{eq E_Q}
\begin{aligned}
E(Q)&=\int_Q^\infty e^{{-}s} s^{A{-}1}\phi_{{\text{end}}}(\frac{s}{N})ds\leq\Gamma(A,Q),
\end{aligned}
\end{equation}
where $\Gamma(A,Q)$ is the incomplete gamma function \cite[eq.(8.350.2)]{TheTable}. Since the integrands in \eqref{eq E_Q} are all positive for $s\in[0,\infty)$, then $E(Q)\geq 0$. 
Apply the Trapezoidal rules \cite[eq.(3.5.2)]{MathFunctions2010} to evaluate the finite integration, then \eqref{eq integration_approximation_ER} and \eqref{eq discretization_error EM} can be obtained. Follow the limitation rules, we get \eqref{eq EQ=0}.
\bibliographystyle{IEEEtran}
\bibliography{ref_EffRateMGF2015_ACCEPT_Version}

\begin{thebibliography}{10}
\providecommand{\url}[1]{#1}
\csname url@samestyle\endcsname
\providecommand{\newblock}{\relax}
\providecommand{\bibinfo}[2]{#2}
\providecommand{\BIBentrySTDinterwordspacing}{\spaceskip=0pt\relax}
\providecommand{\BIBentryALTinterwordstretchfactor}{4}
\providecommand{\BIBentryALTinterwordspacing}{\spaceskip=\fontdimen2\font plus
\BIBentryALTinterwordstretchfactor\fontdimen3\font minus
  \fontdimen4\font\relax}
\providecommand{\BIBforeignlanguage}[2]{{%
\expandafter\ifx\csname l@#1\endcsname\relax
\typeout{** WARNING: IEEEtran.bst: No hyphenation pattern has been}%
\typeout{** loaded for the language `#1'. Using the pattern for}%
\typeout{** the default language instead.}%
\else
\language=\csname l@#1\endcsname
\fi
#2}}
\providecommand{\BIBdecl}{\relax}
\BIBdecl

\bibitem{EffCap}
D.~Wu and R.~Negi, ``Effective capacity: {A} wireless link model for support of
  quality of service,'' \emph{IEEE Trans. Wireless Commun.}, vol.~2, no.~4, pp.
  630--643, Jul. 2003.

\bibitem{E2E_EffCap2012}
A.~Khalek and Z.~Dawy, ``Energy-efficient cooperative video distribution with
  statistical {QoS} provisions over wireless networks,'' \emph{IEEE Trans.
  Mobile Comput.}, vol.~11, no.~7, pp. 1223--1236, Jul. 2012.

\bibitem{EC_TD_Scheduling2010}
A.~Balasubramanian and S.~Miller, ``The effective capacity of a time division
  downlink scheduling system,'' \emph{IEEE Trans. Commun.}, vol.~58, no.~1, pp.
  73--78, Jan. 2010.

\bibitem{ref_ER_TD2015}
S.~Agarwal, S.~De, S.~Kumar, and H.~Gupta, ``{QoS}-aware downlink cooperation
  for cell-edge and handoff users,'' \emph{IEEE Trans. Veh. Technol.}, vol.~64,
  no.~6, pp. 2512--2527, Jun. 2015.

\bibitem{EC_2Hop2013}
D.~Qiao, M.~Gursoy, and S.~Velipasalar, ``Effective capacity of two-hop
  wireless communication systems,'' \emph{IEEE Trans. Inf. Theory}, vol.~59,
  no.~2, pp. 873--885, Feb. 2013.

\bibitem{EffCap_CR_2010}
S.~Akin and M.~C. Gursoy, ``Effective capacity analysis of cognitive radio
  channels for quality of service provisioning,'' \emph{IEEE Trans. Commun.},
  vol.~9, no.~11, pp. 3354--3364, Nov. 2010.

\bibitem{ref_EC_CRN2015}
Y.~Yang, S.~Aissa, and K.~Salama, ``Spectrum band selection in delay-{QoS}
  constrained cognitive radio networks,'' \emph{IEEE Trans. Veh. Technol.},
  vol.~64, no.~7, pp. 2925--2937, Jul. 2015.

\bibitem{MISOFramework2012}
M.~Matthaiou, G.~C. Alexandropoulos, H.~Q. Ngo, and E.~G. Larsson, ``Analytic
  framework for the effective rate of {MISO} fading channels,'' \emph{IEEE
  Trans. Commun.}, vol.~60, no.~6, pp. 1741--1751, Jun. 2012.

\bibitem{ER_Weibull2015}
M.~You, H.~Sun, J.~Jiang, and J.~Zhang, ``Effective rate analysis in {Weibull}
  fading channels,'' \emph{IEEE Wireless Commun. Lett.}, vol.~5, no.~4, pp.
  340--343, Apr. 2016.

\bibitem{eta_mu_EffRate_2013}
J.~Zhang, M.~Matthaiou, Z.~Tan, and H.~Wang, ``Effective rate analysis of
  {MISO} $\eta$-$\mu$ fading channels,'' in \emph{Proc. IEEE Int. Conf. Commun.
  (ICC)}, Budapest, Jun. 2013, pp. 5840--5844.

\bibitem{alpha_mu_EffRate_2015}
J.~Zhang, L.~Dai, Z.~Wang, D.~Ng, and W.~Gerstacker, ``Effective rate analysis
  of {MISO} systems over $\alpha$-$\mu$ fading channels,'' in \emph{Proc.
  Global Communications Conference (GLOBECOM)}, San Diego, CA, Dec. 2015, pp.
  1--6.

\bibitem{kappa_mu_EffRate_2015}
J.~Zhang, L.~Dai, W.~H. Gerstacker, and Z.~Wang, ``Effective capacity of
  communication systems over $\kappa{-}\mu$ shadowed fading channels,''
  \emph{Electron. Lett.}, vol.~51, no.~19, pp. 1540--1942, Sep. 2015.

\bibitem{correlated_MISO_EffRate}
C.~Zhong, T.~Ratnarajah, K.~K. Wong, and M.~S. Alouini, ``Effective capacity of
  correlated {MISO} channels,'' in \emph{Proc. IEEE Int. Conf. Commun. (ICC)},
  Kyoto, Jun. 2011, pp. 1--5.

\bibitem{ref_correlated_Nakagami2012}
X.~B. Guo, L.~Dong, and H.~Yang, ``Performance analysis for effective rate of
  correlated {MISO} fading channels,'' \emph{Electron. Lett.}, vol.~48, no.~24,
  pp. 1564--1565, November 2012.

\bibitem{MGF_EGC_MRC_GFading_ShannonC_2012}
F.~Yilmaz and M.~Alouini, ``A unified {MGF}-based capacity analysis of
  diversity combiners over generalized fading channels,'' \emph{IEEE Trans.
  Commun.}, vol.~60, no.~3, pp. 862--875, Mar. 2012.

\bibitem{ShannonC_MGF_2010}
M.~Di~Renzo, F.~Graziosi, and F.~Santucci, ``Channel capacity over generalized
  fading channels: A novel {MGF}-based approach for performance analysis and
  design of wireless communication systems,'' \emph{IEEE Trans. Veh. Technol.},
  vol.~59, no.~1, pp. 127--149, Jan. 2010.

\bibitem{ref_MGF_Pout2000}
Y.~C.~K. Ko, M.~S. Alouini, and M.~K. Simon, ``Outage probability of diversity
  systems over generalized fading channels,'' \emph{IEEE Trans. Commun.},
  vol.~48, no.~11, pp. 1783--1787, Nov. 2000.

\bibitem{ER_SISO_MGF_H_2014}
Z.~Ji, C.~Dong, Y.~Wang, and J.~Lu, ``On the analysis of effective capacity
  over generalized fading channels,'' in \emph{Proc. IEEE Int. Conf. Commu.
  (ICC)}, Sydney, Jun. 2014, pp. 1977--1983.

\bibitem{ref_MGF_ER_MultiHop_2016}
K.~Peppas, P.~T. Mathiopoulos, and J.~Yang, ``On the effective capacity of
  amplify-and-forward multi-hop transmission over arbitrary and correlated
  fading channels,'' \emph{IEEE Wireless Commun. Lett.}, vol.~PP, no.~99, pp.
  1--1, 2016.

\bibitem{H_Transform2015}
Y.~Jeong, H.~Shin, and M.~Win, ``{H}-transforms for wireless communication,''
  \emph{IEEE Trans. Inf. Theory}, vol.~61, no.~7, pp. 3773--3809, Jul. 2015.

\bibitem{MathFunctions2010}
F.~W. Olver, D.~W. Lozier, R.~F. Boisvert, and C.~W. Clark, \emph{{NIST}
  handbook of mathematical functions}.\hskip 1em plus 0.5em minus 0.4em\relax
  Washington, DC: Cambridge University Press, 2010.

\bibitem{lowSNR2011}
M.~C. Gursoy, ``{MIMO} wireless communications under statistical queueing
  constraints,'' \emph{IEEE Trans. Inf. Theory}, vol.~57, no.~9, pp.
  5897--5917, Sep. 2011.

\bibitem{HFunction_thebook2010}
A.~M. Mathai, R.~K. Saxena, and H.~J. Haubold, \emph{The {H}-function: theory
  and applications}, 2010th~ed.\hskip 1em plus 0.5em minus 0.4em\relax Springer
  Science \& Business Media, Oct. 2009.

\bibitem{HyperFoxHFading2012}
F.~Yilmaz and M.~S. Alouini, ``A novel unified expression for the capacity and
  bit error probability of wireless communication systems over generalized
  fading channels,'' \emph{IEEE Trans. Commun.}, vol.~60, no.~7, pp.
  1862--1876, Jul. 2012.

\bibitem{kappa_mu_inid_2014}
J.~Zhang, Z.~Tan, H.~Wang, Q.~Huang, and L.~Hanzo, ``The effective throughput
  of {MISO} systems over $\kappa-\mu$ fading channels,'' \emph{IEEE Trans. Veh.
  Technol.}, vol.~63, no.~2, pp. 943--947, Feb. 2014.

\bibitem{Cook_H_thesis1992}
J.~I.~D. Cook, ``The {H}-function and probability density functions of certain
  algebraic combinations of independent random variables with {H}-function
  probability distribution,'' Ph.D. dissertation, Univ. Texas, Austin, TX, USA,
  1981.

\bibitem{HVariates1972}
B.~D. Carter, ``On the probability distribution of rational functions of
  independent {H}-function variates,'' Ph.D. dissertation, Univ. Arkansas,
  Fayetteville, AR, USA, 1972.

\bibitem{forLemma11986}
H.~Srivastava and M.~Garg, ``Some integrals involving a general class of
  polynomials and the multivariable {H}-function,'' \emph{Rev. Roumaine Phys.},
  vol.~22, pp. 685--692, 1987.

\bibitem{ref_multiH_property_1977}
R.~Saxena, ``On the {H}-function of n variables,'' \emph{Kyungpook Math J},
  vol.~17, pp. 221--226, 1977.

\bibitem{ref_Hfunction_expansion2013}
S.~Gaboury and R.~Tremblay, ``An expansion theorem involving {H}-function of
  several complex variables,'' \emph{Int. J. Analysis}, 2013.

\bibitem{ref_expansion_multiH1976}
H.~Srivastava and R.~Panda, ``Expansion theorems for the {H} function of
  several complex variables,'' \emph{J. Reine Angew. Math}, vol. 288, pp.
  129--145, 1976.

\bibitem{ref_integration_multiH1983}
H.~Srivastava and N.~Singh, ``The integration of certain products of the
  multivariable {H}-function with a general class of polynomials,''
  \emph{Rendiconti del Circolo Matematico di Palermo}, vol.~32, no.~2, pp.
  157--187, 1983.

\bibitem{ref_HFading1977}
B.~D. Carter and M.~D. Springer, ``The distribution of products, quotients and
  powers of independent {H}-function variates,'' \emph{SIAM J. Appl. Math.},
  vol.~33, no.~4, pp. 542--558, Aug. 1977.

\bibitem{VTC_recommend_Weibull1988}
{IEEE Vehicular Technology Society Committee on Radio Propagation}, ``Coverage
  prediction for mobile radio systems operating in the 800/900 {MHz} frequency
  range,'' \emph{IEEE Trans. Veh. Technol.}, vol.~37, no.~1, pp. 3--72, Feb.
  1988.

\bibitem{ref_reduction_hypergeometric1985}
H.~Srivastava, ``A class of generalised multiple hypergeometric series arising
  in physical and quantum chemical applications,'' \emph{J. Physics A:
  Mathematical and General}, vol.~18, no.~5, p. 227, 1985.

\bibitem{generalized_K_fading1986}
R.~Barakat, ``Weak-scatterer generalization of the {K}-density function with
  application to laser scattering in atmospheric turbulence,'' \emph{J. OSA.
  A.}, vol.~3, no.~4, pp. 401--409, Apr. 1986.

\bibitem{ref_correlated_GammaGamma}
J.~Zhang, M.~Matthaiou, G.~K. Karagiannidis, and L.~Dai, ``On the multivariate
  gamma-gamma distribution with arbitrary correlation and applications in
  wireless communications,'' \emph{IEEE Trans. Veh. Technol.}, vol.~65, no.~5,
  pp. 3834--3840, May 2016.

\bibitem{Wbl_Gamma_fading2009}
P.~Bithas, ``Weibull-gamma composite distribution: alternative
  multipath/shadowing fading model,'' \emph{Electron. Lett.}, vol.~45, no.~14,
  pp. 749--751, Jul. 2009.

\bibitem{H_calculation2009}
F.~Yilmaz and M.~S. Alouini, ``Product of the powers of generalized
  {Nakagami-$m$} variates and performance of cascaded fading channels,'' in
  \emph{Proc. IEEE Global Commun. Conf. (GLOBECOM)}, Honolulu, Nov. 2009, pp.
  1--8.

\bibitem{ref_multiH_evaluation_python2016}
H.~R. Alhennawi, M.~M. H.~E. Ayadi, M.~H. Ismail, and H.~A.~M. Mourad,
  ``Closed-form exact and asymptotic expressions for the symbol error rate and
  capacity of the {H} -function fading channel,'' \emph{IEEE Trans. Veh.
  Technol.}, vol.~65, no.~4, pp. 1957--1974, Apr. 2016.

\bibitem{TheTable}
I.~S. Gradshteyn and I.~M. Ryzhik, \emph{Table of integrals, series, and
  products}, 6th~ed.\hskip 1em plus 0.5em minus 0.4em\relax Academic Press,
  2000.

\end{thebibliography}

\end{document}